\newtheorem{theorem}{\bf Theorem}
\newtheorem{remark}{\bf Remark}
\newtheorem{proposition}{\bf Proposition}
\newcommand{\bx}{{\bf x}}
\newcommand{\by}{{\bf y}}
\newcommand{\bz}{{\bf z}}
\newcommand{\bw}{{\bf w}}
\newcommand{\bv}{{\bf v}}
\begin{document}
\title{Continuous-Time Channel Gain Control for Minimum-Information Kalman-Bucy Filtering}
\author{Takashi Tanaka, Vrushabh Zinage, Valery Ugrinovskii, and Mikael Skoglund 
\thanks{T. Tanaka and V. Zinage were supported by NSF Award 1944318. V. Ugrinovskii was supported by the Australian Research Council under 
  Discovery Projects funding scheme (project DP200102945).
M. Skoglund was funded in part by
 the Swedish research council under contract 2019-03606.}
\thanks{T. Tanaka and V. Zinage are with the Department of Aerospace Engineering and Engineering Mechanics at the University of Texas at Austin, Austin, TX 78712 USA (e-mail: ttanaka@utexas.edu, vrushabh.zinage@utexas.edu). }
\thanks{V. Ugrinovskii is with the School of Engineering and Information Technology at the University of New South Wales Canberra, Canberra ACT, Australia (e-mail: v.ougrinovski@adfa.edu.au). }
\thanks{M. Skoglund is with the Division of Information Science and
  Engineering at the KTH Royal Institute of Technology, Stockholm,
  Sweden (e-mail: skoglund@kth.se).}}

\maketitle

\begin{abstract}
We consider the problem of estimating a continuous-time Gauss-Markov source process observed through a vector Gaussian channel with an adjustable channel gain matrix. For a given (generally time-varying) channel gain matrix, we provide formulas to compute (i) the mean-square estimation error attainable by the classical Kalman-Bucy filter, and (ii) the mutual information between the source process and its Kalman-Bucy estimate.
We then formulate a novel ``optimal channel gain control problem'' where the objective is to control the channel gain matrix strategically to minimize the weighted sum of these two performance metrics. To develop insights into the optimal solution, we first consider the problem of controlling a time-varying channel gain over a finite time interval. A necessary optimality condition is derived based on Pontryagin's minimum principle. For a scalar system, we show that the optimal channel gain is a piece-wise constant signal with at most two switches. We also consider the problem of designing the optimal time-invariant gain to minimize the average cost over an infinite time horizon. A novel semidefinite programming (SDP) heuristic is proposed and the exactness of the solution is discussed.
\end{abstract}

\begin{IEEEkeywords}
Kalman filters, Information theory, Continuous time systems, Networked control systems,  Optimal control. 
\end{IEEEkeywords}

\section{Introduction}
\label{sec:introduction}

In this paper, we consider the problem of estimating a continuous-time Gauss-Markov process from a noisy observation through a vector Gaussian channel with an adjustable channel gain matrix. Once the channel gain is given, the optimal causal estimate in the minimum mean-square (MSE) sense is readily computable by the celebrated Kalman-Bucy filter. 
However, our focus in this paper is on a generalized problem in which the channel gain matrix must also be designed strategically. 
Specifically, our objective is to design a (generally time-varying) channel gain matrix to minimize the weighted sum of (i) the MSE attainable by the resulting Kalman-Bucy filter, and (ii) the mutual information between the source process and its Kalman-Bucy estimate.
Since choosing a ``larger'' channel gain results in a smaller MSE and a larger mutual information, these two performance criteria are in a trade-off relationship in general. Therefore, controlling the channel gain to attain the sweet spot is a nontrivial problem. 

The problem we study is motivated by the causal source coding scenario in which an observed time series must be encoded, compressed, and transmitted over a digital communication media to a remote decoder who tries to reproduce the original signal without delay (e.g., \cite{fine1964properties, witsenhausen1979structure,TatikondaThesis, teneketzis2006structure, yuksel2013stochastic, guo2021optimal} and references therein).
Since the rate-distortion trade-off under such a causality constraint sets a fundamental performance limitation for control systems over resource-constrained communication networks, the optimal codec design has been investigated extensively in the networked control systems literature \cite{matveev2004problem,nair2007feedback,silva2010framework,yuksel2013jointly,tanaka2017lqg,kostina2019rate}. 
An approach based on the information-theoretic causal rate-distortion function (also called non-anticipative, zero-delay, or sequential rate-distortion function) \cite{gorbunov1974prognostic, tatikonda2004stochastic,derpich2012improved,charalambous2013nonanticipative,tanaka2016semidefinite} attempts to estimate the minimum bit-rate required for remote estimation using the mutual information between the source process and the reproduced process.  
For stationary Gaussian sources, \cite{derpich2012improved} showed that the minimum bit-rate required to reproduce the signal within a given MSE distortion criterion is lower bounded by the causal rate-distortion function and that the conservatism of this lower bound is less than a constant space-filling gap.
Subsequent works by the authors of \cite{tanaka2016semidefinite} and \cite{stavrou2018optimal} developed algorithms to compute the causal rate-distortion function for discrete-time Gauss-Markov sources with the MSE distortion criteria. The work \cite{tanaka2016semidefinite} proved the ``channel-filter separation principle,'' asserting that the optimal test channel (a stochastic kernel on the reproduced signal given the source signal that minimizes mutual information subject to a given MSE distortion constraint) can always be realized by a memoryless Gaussian channel with an appropriately chosen channel gain followed by a Kalman filter. This result implies that, for discrete-time Gauss-Markov processes, computing the optimal channel gain that minimizes the mutual information under a given MSE distortion constraint reveals a fundamental limitation of real-time data compression for remote estimation. Although the existing works on causal rate-distortion theory are largely limited to discrete-time settings (exceptions include \cite{guo2021optimal}), the results we present in this paper allow us to solve the continuous-time counterpart of the same problem, and will contribute to the development of causal source codes for continuous-time processes.

The trade-off between the mutual information (I) and the minimum mean-square error (MMSE) is known as the \emph{I-MMSE relationship} in the information theory literature. For random variable observed through Gaussian channels, Guo et al. \cite{guo2005mutual} showed that the derivative of the mutual information with respect to the channel SNR (signal-to-noise ratio) is equal to half the MMSE. They also considered random processes observed through Gaussian channels and provided a simple connection between causal and non-causal MMSE. The I-MMSE relationship for continuous-time source processes was derived by a predating work by Duncan \cite{duncan1970calculation}.
Kadota et al. \cite{kadota1971mutual} considered the problem of estimating continuous-time source over Gaussian channel with feedback (the source is causally affected by channel output).
Weissman et al. \cite{weissman2013directed} further studied the cases with feedback and derived a fundamental relationship between directed information and MMSE.
Palomar and Verd\'u \cite{palomar2006gradient} extended the result by \cite{guo2005mutual} to vector Gaussian channels and derived an explicit formula to compute the gradient of mutual information with respect to channel parameters. They applied this result to obtain a gradient ascent algorithm for channel precoder design to maximize the input-output mutual information subject to input power constraints.

\subsection{Contribution of this paper}
Contributions of this paper are summarized as follows:
\begin{enumerate}
\item We derive a new mutual information formula (Theorem~\ref{theo:MI}) that generalizes both the main result of Duncan \cite{duncan1970calculation} and our earlier result \cite[Theorem 1]{tanaka2017optimal}.
\item We show the existence of a measurable solution to the optimal channel gain control problem over a finite time interval based on Filippov's result \cite{optimal_control_daniel_liberzon}. We also derive a necessary optimality condition based on Pontryagin's minimum principle.
\item We prove that the optimal channel gain control over a finite time interval for a scalar source process is piecewise constant with at most two discontinuities.
\item We also consider the problem of finding the optimal time-invariant channel gain for minimum-information Kalman-Bucy filtering. We propose a semidefinite programming (SDP) relaxation to compute an optimal solution candidate. We present the result of extensive numerical experiments suggesting that the relaxation is in fact exact, although we are currently not aware of a theoretical guarantee on the exactness. 
\end{enumerate}
Preliminary versions of the results appeared in the authors' conference publications \cite{tanaka2017optimal} and \cite{zinage2021optimal}.
The items 2) and 3) extended the results in \cite{zinage2021optimal} to accommodate more general channel gain constraints. 
The item 4) did not appear in any of prior publications.

\subsection{Notation}
We will use notation $x_{[t_1, t_2]}=\{x_t: t_1 \leq t \leq t_2\}$ to denote a continuous-time signal. Bold symbols like $\bx$ will be used to denote random variables. We assume all the random variables considered in this paper are defined on the probability space $(\Omega, \mathcal{F}, \mathcal{P})$. The probability distribution $p_\bx$ of an $(\mathcal{X}, \mathcal{A})$-valued random variable $\bx$ is defined by
\[
p_\bx(A)=\mathcal{P}\{\omega\in\Omega : \bx(\omega)\in A\}, \; \forall A \in \mathcal{A}.
\]
If $\bx_1$ and $\bx_2$ are both $(\mathcal{X}, \mathcal{A})$-valued random variables, the relative entropy from $\bx_2$ to $\bx_1$ is defined by
\[
D(p_{\bx_1} \| p_{\bx_2})=\int \log \frac{dp_{\bx_1}}{dp_{\bx_2}} dp_{\bx_1}
\]
provided that the Radon-Nikodym derivative $\frac{dp_{\bx_1}}{dp_{\bx_2}}$ exists, and $+\infty$ otherwise. The mutual information between two random variables $\bx$ and $\by$ is defined by
\[
I(\bx;\by)=D(p_{\bx\by} \| p_\bx \otimes p_\by)
\]
where $p_{\bx\by}$ and $p_\bx \otimes p_\by$ denote the joint and product distributions, respectively.

\section{Problem formulation}
\subsection{System description}
Let $(\Omega,\mathcal{F}, \mathcal{P})$ be a complete probability space and suppose $\mathcal{F}_t\subset \mathcal{F}$ form a non-decreasing family of $\sigma$-algebras.
Suppose $(\bw_t, \mathcal{F}_t)$ and $(\bv_t, \mathcal{F}_t)$ are $\mathbb{R}^n$-valued, mutually independent standard Wiener processes with respect to $\mathcal{P}$.
Define the random process to be estimated as an $n$-dimensional Ito process
\begin{equation}
\label{eq:source}
    d\bx_t=A\bx_t dt + B d\bw_t, \;\;\; t\in [t_0, t_1]
\end{equation}
with $\bx_{t_0}\sim\mathcal{N}(0, X_0)$, where $X_0\succeq 0$ is a given covariance matrix and $[t_0, t_1]$ is a time interval. We assume $A$ and $B$ are Hurwitz and nonsingular matrices, respectively. Let $C_t:[t_0, t_1] \rightarrow \mathbb{R}^{n\times n}$ be a measurable function representing the time-varying channel gain.
Setting $\bz_t \triangleq C_t \bx_t$, the channel output is an $n$-dimensional signal
\begin{equation}
\label{eq:measurement}
    d\by_t=\bz_t dt + d\bv_t, \;\;\; t\in [t_0, t_1]
\end{equation}
with $\by_{t_0}=0$. Based on the channel output $\by_t$, the causal MMSE estimate $\hat{\bx}_t\triangleq \mathbb{E}(\bx_t|\mathcal{F}_t^{\by})$ is computed, where $\mathcal{F}_t^{\by} \subset \mathcal{F}$ denotes the  $\sigma$-algebra generated by $\by_{[0,t]}$.
The causal MMSE estimate can be computed by the 
Kalman-Bucy filter
\begin{equation}
\label{eqkbfilter}
d\hat{\bx}_t=A\hat{\bx}_tdt+X_tC_t^\top (d\by_t-C_t\hat{\bx}_tdt), \;\;\; t\in [t_0, t_1]
\end{equation}
with $\hat{\bx}_{t_0}=0$. Here, $X_t$ is the unique solution to the matrix Riccati differential equation
\begin{equation}
\label{eqriccati}
\dot{X_t}=AX_t+X_tA^\top-X_tC_t^\top C_tX_t+BB^\top, \;\; t\in [t_0, t_1]
\end{equation}
with the initial condition $X_{t_0}=X_0\succeq 0$. The system architecture considered in this paper is shown in Fig.~\ref{fig:channel}.

\begin{figure}[h]
\centering
\includegraphics[width=8.5cm]{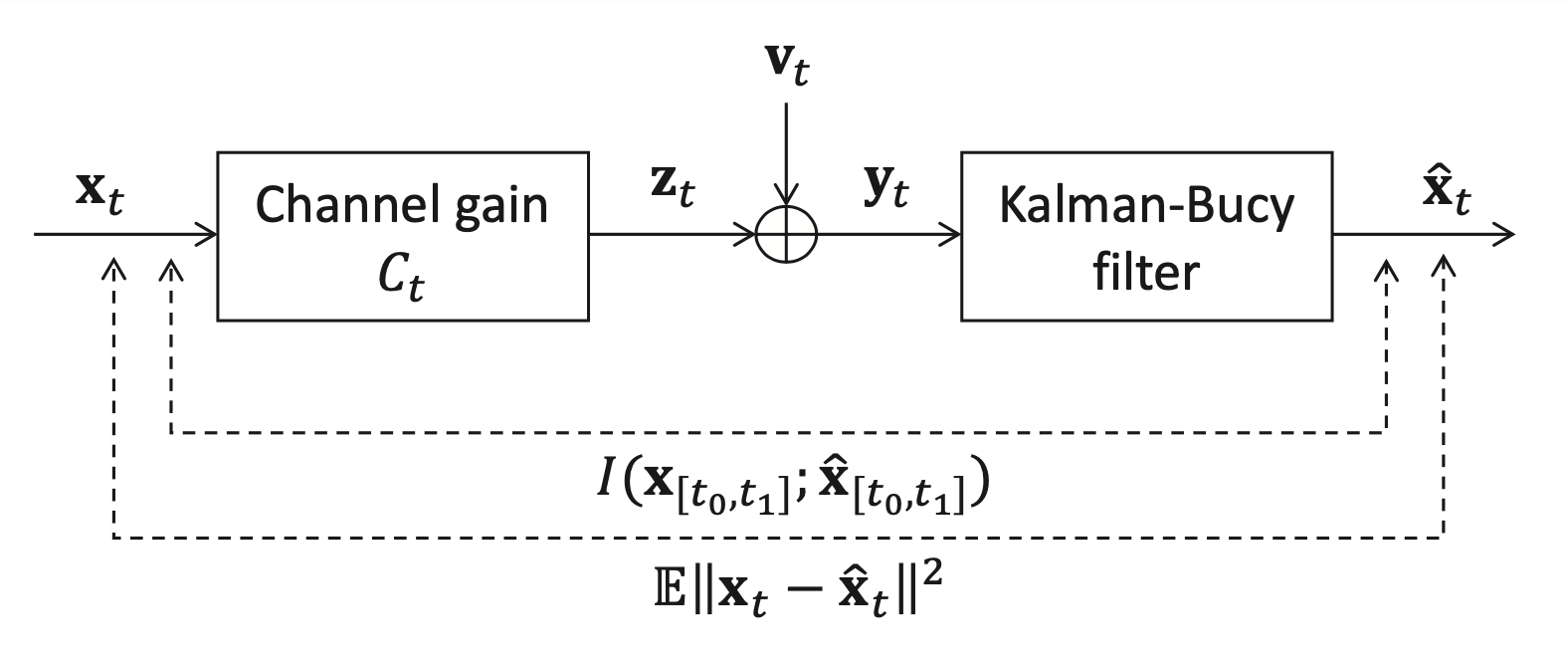}
\caption{System architecture and performance criteria.}
\label{fig:channel}
\end{figure}

\subsection{Performance criteria}
In this paper, we consider the problem of optimally controlling the time-varying channel gain $C_t$. The optimality is characterized in terms of the mean-square error and the mutual information, as shown in Fig.~\ref{fig:channel}.
\subsubsection{Mean-square error (MSE)}
The first criterion is the MMSE achieved by the Kalman-Bucy filter.
\begin{equation}
\label{eq:mmse}
 \int_{t_0}^{t_1} \mathbb{E} \| \bx_t-\hat{\bx}_t\|^2 dt =\int_{t_0}^{t_1} \mathrm{Tr} (X_t) \; dt.   
\end{equation}
\subsubsection{Mutual information}
The second performance criterion is the mutual information $I(\bx_{[t_0, t_1]}; \hat{\bx}_{[t_0, t_1]})$. The next theorem
extends the result of Duncan \cite{duncan1970calculation} and provides a key formula to compute $I(\bx_{[t_0, t_1]}; \hat{\bx}_{[t_0, t_1]})$ explicitly. 
\begin{theorem}
\label{theo:MI}
\normalfont Let the random processes $\bx_{[t_0,t_1]}$ and $\hat{\bx}_{[t_0,t_1]}$ be defined as above. Then
\[
I(\bx_{[t_0, t_1]}; \hat{\bx}_{[t_0, t_1]})=\frac{1}{2}\int_{t_0}^{t_1} \mathbb{E}\|C_t (\bx_t-\hat{\bx}_t)\|^2 dt.
\]
\end{theorem}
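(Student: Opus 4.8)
The plan is to evaluate $I(\bx_{[t_0, t_1]}; \hat{\bx}_{[t_0, t_1]})$ directly, through the disintegration identity
\[
I(\bx_{[t_0, t_1]}; \hat{\bx}_{[t_0, t_1]}) = \mathbb{E}_{\bx}\!\left[\, D\big(P_{\hat{\bx}\mid\bx}\,\big\|\,P_{\hat{\bx}}\big)\right],
\]
where $P_{\hat{\bx}}$ is the law of the Kalman--Bucy estimate $\hat{\bx}_{[t_0,t_1]}$ on the path space $C([t_0,t_1];\mathbb{R}^n)$ and $P_{\hat{\bx}\mid\bx}$ is its regular conditional law given the source path $\bx_{[t_0,t_1]}$. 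The key observation is that both measures are laws of \emph{linear} It\^o processes sharing the \emph{same} diffusion coefficient $X_tC_t^\top$, so the relative entropy is computable in closed form via Girsanov's theorem.

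First I would record the two representations. Unconditionally, the innovations theorem says $\nu_t \triangleq \by_t - \int_{t_0}^t C_s\hat{\bx}_s\,ds$ is a standard $\mathcal{F}_t^{\by}$-Wiener process, and then \eqref{eqkbfilter} reads $d\hat{\bx}_t = A\hat{\bx}_t\,dt + X_tC_t^\top\,d\nu_t$ with $\hat{\bx}_{t_0}=0$; this identifies $P_{\hat{\bx}}$. Conditionally on $\bx_{[t_0,t_1]}$, substituting \eqref{eq:measurement} into \eqref{eqkbfilter} gives
\[
d\hat{\bx}_t = \big[(A - X_tC_t^\top C_t)\hat{\bx}_t + X_tC_t^\top C_t\bx_t\big]\,dt + X_tC_t^\top\,d\bv_t,
\]
and since $\bv_{[t_0,t_1]}$ is independent of $\bx_{[t_0,t_1]}$ (the latter being built from $\bw_{[t_0,t_1]}$ and $\bx_{t_0}$, which are independent of $\bv$), $\bv$ remains a standard Wiener process under the conditional measure; with the forcing term $X_tC_t^\top C_t\bx_t$ now known, this SDE realizes $P_{\hat{\bx}\mid\bx}$. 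The drift difference of the two representations is
\[
X_tC_t^\top C_t(\bx_t-\hat{\bx}_t) = X_tC_t^\top\,\phi_t, \qquad \phi_t \triangleq C_t(\bx_t-\hat{\bx}_t),
\]
i.e.\ it lies in the range of the shared diffusion. Girsanov's theorem then yields $\frac{dP_{\hat{\bx}\mid\bx}}{dP_{\hat{\bx}}} = \exp\!\big(\int_{t_0}^{t_1}\phi_t^\top\,d\nu_t - \tfrac12\int_{t_0}^{t_1}\|\phi_t\|^2\,dt\big)$; taking logarithms, integrating against $P_{\hat{\bx}\mid\bx}$ (the stochastic-integral term has zero mean once $\mathbb{E}\int_{t_0}^{t_1}\|\phi_t\|^2\,dt<\infty$, which is exactly finiteness of the claimed quantity), and averaging over $\bx$, gives $I(\bx_{[t_0,t_1]};\hat{\bx}_{[t_0,t_1]}) = \tfrac12\int_{t_0}^{t_1}\mathbb{E}\|C_t(\bx_t-\hat{\bx}_t)\|^2\,dt$.

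The step requiring the most care — and the main obstacle — is justifying the Girsanov computation when $C_t$ is rank-deficient, so that the shared diffusion $X_tC_t^\top$ is singular: one must verify that the drift difference genuinely lies in its range for a.e.\ $t$ (otherwise $P_{\hat{\bx}\mid\bx}\perp P_{\hat{\bx}}$ and the relative entropy is $+\infty$), and that the representative $\phi_t = C_t(\bx_t-\hat{\bx}_t)$ is the ``right'' one in the quadratic form $(b_1-b_2)^\top (X_tC_t^\top C_t X_t)^{+}(b_1-b_2)$. Both points exploit that $B$ is nonsingular, hence $X_t\succ0$ for $t>t_0$, so $\mathrm{range}(X_tC_t^\top)=\mathrm{range}(C_t^\top)$ and $\phi_t\in\mathrm{range}(C_t)$ is exactly captured by the pseudo-inverse; a routine localization removes the a priori unboundedness of $\phi_t$. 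As a sanity check, since $\hat{\bx}_{[t_0,t_1]}$ is a measurable functional of $\by_{[t_0,t_1]}$ (solve \eqref{eqkbfilter}), the data-processing inequality gives $I(\bx_{[t_0,t_1]};\hat{\bx}_{[t_0,t_1]})\le I(\bx_{[t_0,t_1]};\by_{[t_0,t_1]})$, and Duncan's theorem identifies the right-hand side with the very same $\tfrac12\int_{t_0}^{t_1}\mathbb{E}\|C_t(\bx_t-\hat{\bx}_t)\|^2\,dt$ (using $\mathbb{E}[C_t\bx_t\mid\mathcal{F}_t^{\by}] = C_t\hat{\bx}_t$), so the computation above simultaneously reproves Duncan's formula in this setting and shows this data-processing inequality is tight. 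An alternative, purely information-theoretic route to the matching lower bound would establish $\mathcal{F}_{t_1}^{\hat{\bx}} = \mathcal{F}_{t_1}^{\by}$ — invert $X_tC_t^\top$ in \eqref{eqkbfilter} to recover $\nu$, hence $\by$, from $\hat{\bx}$ — in the full-rank case and then pass to the general case by approximation, but the direct Girsanov computation avoids this case split.
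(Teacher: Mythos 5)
Your argument is correct, and it takes a genuinely different route from the paper's. The paper's proof is a three-step reduction: it invokes Duncan's identity to get $I(\by_{[t_0,t_1]};\bz_{[t_0,t_1]})=\tfrac12\int\mathbb{E}\|C_t(\bx_t-\hat\bx_t)\|^2dt$, then uses the fact that the Kalman--Bucy estimate $\hat\bx_{[t_0,t_1]}$ is a sufficient statistic of $\by_{[t_0,t_1]}$ for $\bx_{[t_0,t_1]}$ to conclude $I(\bx;\by)=I(\bx;\hat\bx)$, and finally exploits the Markov chain $\bx\to\bz\to\by$ together with $\bz$ being a deterministic function of $\bx$ to get $I(\bx;\by)=I(\by;\bz)$ by equality in data processing. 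You instead compute $I(\bx_{[t_0,t_1]};\hat\bx_{[t_0,t_1]})$ head-on as $\mathbb{E}_\bx\big[D(P_{\hat\bx\mid\bx}\|P_{\hat\bx})\big]$, identifying the conditional and unconditional laws of $\hat\bx$ as linear diffusions sharing the same (generally degenerate) diffusion coefficient $X_tC_t^\top$ and invoking Girsanov. Your route is more self-contained: it avoids citing Duncan as a black box and sidesteps the sufficiency argument entirely (your closing remark about recovering $\by$ from $\hat\bx$ via $X_tC_t^\top$ invertibility is essentially the paper's sufficiency claim, which you make optional rather than essential). The price is that you must carry the technical weight that Duncan's proof already absorbs — Novikov/localization, and, as you correctly flag as the delicate point, the rank-deficient case, where one must check the drift difference lies in $\mathrm{range}(X_tC_t^\top)$ and that $\phi_t=C_t(\bx_t-\hat\bx_t)$ is the pseudo-inverse representative. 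Your use of $X_t\succ0$ for $t>t_0$ (from $B$ nonsingular) to identify $\mathrm{range}(X_tC_t^\top)=\mathrm{range}(C_t^\top)$ handles this cleanly. One small caution on your sanity check: Duncan's theorem as cited in the paper gives the formula for $I(\by;\bz)$, not directly $I(\bx;\by)$; the bridge from one to the other is exactly the Markov-chain/data-processing-equality step the paper carries out, so that part of your ``sanity check'' silently reuses the paper's argument rather than being an independent confirmation. The main Girsanov computation, however, stands on its own.
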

\begin{proof}
The following identity is shown in \cite{duncan1970calculation}:
\begin{equation}
\label{eq:thm1_1}
I(\by_{[t_0, t_1]}; \bz_{[t_0, t_1]})=\frac{1}{2}\int_{t_0}^{t_1} \mathbb{E}\|C_t (\bx_t-\hat{\bx}_t)\|^2 dt.
\end{equation}
Due to the property of the Kalman-Bucy filter, $\hat{\bx}_{[t_0, t_1]}$ is a sufficient statistic of $\by_{[t_0, t_1]}$ for $\bx_{[t_0, t_1]}$. Thus,
\begin{equation}
\label{eq:thm1_2}
I(\bx_{[t_0, t_1]}; \by_{[t_0, t_1]})=I(\bx_{[t_0, t_1]}; \hat{\bx}_{[t_0, t_1]}).
\end{equation}
It is left to show
\begin{equation}
\label{eq:thm1_3}
I(\bx_{[t_0, t_1]}; \by_{[t_0, t_1]})=I(\by_{[t_0, t_1]}; \bz_{[t_0, t_1]}).
\end{equation}
This fact follows from the dependency structure of the processes $\bx_{[t_0, t_1]}, \by_{[t_0, t_1]}$ and $\bz_{[t_0, t_1]}$. Indeed, under the law $\mathcal{P}$, 
the conditional independence 
\[
p_{\bz_{[t_0, t_1]}|\bx_{[t_0, t_1]},\by_{[t_0, t_1]}}=p_{\bz_{[t_0, t_1]}|\bx_{[t_0, t_1]}}
\]
holds because $\bz_{[t_0, t_1]}$ is a function of $\bx_{[t_0, t_1]}$.
Furthermore, since $\bz_{[t_0, t_1]}$ is independent of $\bv_{[t_0, t_1]}$ under
$\mathcal{P}$, \eqref{eq:measurement} implies the conditional independence
\[
p_{\by_{[t_0, t_1]}|\bx_{[t_0, t_1]},\bz_{[t_0, t_1]}}=p_{\by_{[t_0, t_1]}|\bz_{[t_0, t_1]}}.
\]
These facts allows us to conclude that the data-processing inequality holds with equality \cite{CoverThomas} (see also \cite[Chapter 7]{gray2011entropy}), implying \eqref{eq:thm1_3}.
\end{proof}

Theorem~\ref{theo:MI} is more general than our previous result \cite[Theorem 1]{tanaka2017optimal}, which was only applicable to time-invariant channel gains. The proof has also been simplified significantly by an application of the data-processing inequality. To the best of our knowledge, this result has not appeared in the literature. 

Theorem~\ref{theo:MI} implies that the mutual information $I(\by_{[t_0, t_1]}; \bz_{[t_0, t_1]})$ can be expressed as
\begin{equation}
\label{eqn:mutual_information}
I(\bx_{[t_0, t_1]}; \hat{\bx}_{[t_0, t_1]})=\frac{1}{2}\int_{t_0}^{t_1} \text{Tr}(C_tX_t C_t^\top)dt.
\end{equation}
\subsection{Problem setup}
The central question we study in the paper is how to choose a measurable function $C_t:[t_0, t_1]\rightarrow \mathbb{R}^{n\times n}$ to minimize the weighted sum of  the MSE \eqref{eq:mmse} and the mutual information \eqref{eqn:mutual_information}. Notice that in general it is not possible to minimize these two quantities simultaneously. To see this, suppose we choose $C_t=kC \;\; \forall t$ where $k\geq 0$ is a scalar and $(A,C)$ is an observable pair. As $k\rightarrow +\infty$, the MSE tends to zero while the mutual information tends to $+\infty$.
Introducing a trade-off parameter $\alpha>0$,\footnote{The parameter $\alpha$ is the Lagrange multiplier in view of the hard-constrained version of the problem studied in \cite{tanaka2017optimal}.} the main problem we study in this paper is formulated as follows:
\begin{subequations}
\label{eqn:main_optimization_problem}
\begin{align}
   \underset{C_t}{\mathrm{min}}&\;\; \int_{t_0}^{t_1}\;\mathbb{E}\|\bx_t-\hat{\bx}_t\|^2dt+2\alpha I(\bx_{[t_0, t_1]}; \hat{\bx}_{[t_0, t_1]}) \label{eqn:main_optimization_problem1} \\
   \text{ s.t. }&\;\;\; C_t^\top C_t \preceq \gamma I \;\; \forall \;\;t\in [t_0, t_1]. \label{eqn:main_optimization_problem2}
\end{align}
\end{subequations}
The constraint \eqref{eqn:main_optimization_problem2} imposes an upper bound on the allowable channel gain.
Using \eqref{eq:mmse} and \eqref{eqn:mutual_information}, the problem \eqref{eqn:main_optimization_problem} can be written more explicitly as
\begin{subequations}
\begin{align}
     \underset{C_t}{\mathrm{min}}\quad&\int_{t_0}^{t_1}\mathrm{Tr}(X_t)dt+\alpha \int_{t_0}^{t_1}\mathrm{Tr}(C_tX_tC_t^\top)dt\\
    \mathrm{s.t.}\quad&\dot{X}_t=AX_t+X_tA^\top-X_tC_t^\top C_tX_t+BB^\top \\
    &X_{t_0}=X_{0}\\
    &C_t^\top C_t \preceq \gamma I \quad\forall\;\; t\in[t_0,t_1].
\end{align}
\label{problem:optimization_problem}
\end{subequations}
Introducing $U_t\triangleq C_t^\top C_t\succeq 0$, this can be written as an equivalent optimal control problem with state $X_t$ and control input $U_t$:
\begin{subequations}
\label{problem:optimization_problem_control}
\begin{align}
    \underset{U_t}{\mathrm{min}}\quad&\int_{t_0}^{t_1}\mathrm{Tr}(X_t+\alpha U_tX_t)dt  \label{problem:optimization_problem_control_a}\\
    \mathrm{s.t.}\quad&\dot{X}_t=AX_t+X_tA^\top-X_tU_tX_t+BB^\top\\
    &X_{t_0}=X_{0}\\
    &U_t\succeq 0,\;\;\; U_t\preceq \gamma I \quad\forall\;\; t\in[t_0,t_1].\label{eqn:admissible_controls}
\end{align}
 \label{problem:optimization_problem_control}
\end{subequations}
The minimization is over the space of measurable functions $U_t: [t_0, t_1]\rightarrow \mathbb{S}_+^n (=\{M\in \mathbb{R}^{n\times n}: M \succeq 0\})$.
\begin{remark}
\normalfont The equivalence between
\eqref{problem:optimization_problem} and
\eqref{problem:optimization_problem_control} implies that optimal
solutions to the main problem \eqref{eqn:main_optimization_problem},
if they exist, are not unique. Namely, if $U_t^\star$ is an optimal solution to  \eqref{problem:optimization_problem_control}, then both $\bar{C}_t$ and $\tilde{C}_t$ are optimal solutions to \eqref{problem:optimization_problem} if $U_t^\star=\bar{C}_t^\top \bar{C}_t = \tilde{C}_t^\top \tilde{C}_t$.
\end{remark}

\begin{remark}
\normalfont For simplicity, we assume that $C_t$ is a square matrix throughout this paper. 
A generalization to the case with $C_t\in\mathbb{R}^{m \times n}$ where $m \geq n$ is straightforward. However, a technical difficulty arises if $m<n$ is required. To see this, notice that $C_t\in\mathbb{R}^{m \times n}$ implies  
$\text{rank}(C_t^\top C_t)\leq m$. Therefore, an additional non-convex constraint $\text{rank}(U_t)\leq m \; \forall t\in[t_0, t_1]$ must be included in \eqref{problem:optimization_problem_control} to maintain the equivalence between \eqref{problem:optimization_problem} and 
\eqref{problem:optimization_problem_control}. This type of difficulty has been observed in the sensor design problems in the literature, as in \cite{bansal1989simultaneous}. 
\end{remark}

In this paper, we are also interested in the optimal time-invariant channel gain $C_t=C\in \mathbb{R}^{n\times n}$ that minimizes the average cost over a long time horizon:
\begin{subequations}
\label{problem:time_invariant}
\begin{align}
    \min_{C\in\mathbb{R}^{n\times n}}\quad& \limsup_{t_1\rightarrow +\infty} \frac{1}{t_1-t_0}  \bigg\{ \int_{t_0}^{t_1}\mathbb{E}\|\bx_t-\hat{\bx}_t\|^2 dt \nonumber \\
    &+2\alpha I(\bx_{[t_0, t_1]}; \hat{\bx}_{[t_0, t_1]}) \bigg\} \label{problem:time_invariant_a}\\
    \mathrm{s.t. }\qquad &X_{t_0}=X_{0}, \;\; C^\top C  \preceq \gamma I. \label{problem:time_invariant_b}
\end{align}
\end{subequations}
This can be written as an equivalent optimization problem:
\begin{subequations}
\label{problem:time_invariant_riccati}
\begin{align}
    \min_{C\in\mathbb{R}^{n\times n}}\quad& \limsup_{t_1\rightarrow +\infty} \frac{1}{t_1-t_0}  \int_{t_0}^{t_1}\text{Tr}(X_t+\alpha CX_t C^\top )dt \label{problem:time_invariant_riccati_a}\\
    \mathrm{s.t. }\qquad &\dot{X}_t=AX_t+X_tA^\top-X_tC^\top C X_t+BB^\top \label{problem:time_invariant_riccati_b} \\
    &X_{t_0}=X_{0}, \;\; C^\top C  \preceq \gamma I. \label{problem:time_invariant_riccati_c}
\end{align}
\end{subequations}

\section{Optimality condition}
In this section, we briefly revisit some general results in optimal control theory to discuss the existence of an optimal control for \eqref{problem:optimization_problem_control} and to derive a necessary optimality condition.
Consider the following Lagrange-type optimal control problem with a fixed end time and a free end point.
\begin{subequations}
\begin{align}
    &\underset{u_t}{\mathrm{min}}\quad \int_{t_0}^{t_1}L(x_\tau,u_\tau)d\tau\\
    &\mathrm{s.t.}\quad \dot{x}_t=f(x_t,u_t),\;\;x_t\in\mathbb{R}^n,\;u_t\in\mathcal{U}\label{eqn:dynamics}\\
    &\quad\;\;\;\;\;x_{t_0}=x_0.
    \label{problem:optimal_control_problem_initial_condition}
\end{align}
\label{problem:optimal_control_problem}
\end{subequations}
Suppose that an admissible control input is a measurable function $u_t:[t_0,t_1]\rightarrow \mathcal{U}$ where $\mathcal{U}\subset \mathbb{R}^m$ is a compact set. We assume that $f(x_t,u_t)$, $\frac{\partial f(x_t,u_t)}{\partial x_t}$, $L(x_t,u_t)$ and $\frac{\partial L(x_t,u_t)}{\partial x_t}$ are continuous on $\mathbb{R}^n\times \mathcal{U}\times (t_0,t_1)$. 
Notice that problem  \eqref{problem:optimization_problem_control} is in the form \eqref{problem:optimal_control_problem} if $\mathcal{U}$ is selected as $\mathcal{U}=\{U\in \mathbb{S}_+^n: U\preceq \gamma I \}$.

\subsection{Existence of optimal control}
\label{sec:existence}
We first invoke the following useful result by Filippov \cite{filippov1962certain}:
\begin{theorem}
\label{theo:filippov}
\normalfont \textbf{(Filippov's theorem): \cite{optimal_control_daniel_liberzon}}
Given a control system \eqref{eqn:dynamics} with $u_t\in\mathcal{U}$, assume that its solutions exist on a time interval $[t_0,t_1]$ for all controls and that for every $x_t$ the set $\{f(x_t,u_t):u_t\in\mathcal{U}\}$ is compact and convex. Then the reachable set $R_t(x_0)$ is compact for each $t\in[t_0,t_1]$.
\end{theorem}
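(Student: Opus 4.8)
The plan is to verify separately the two properties that together give compactness of $R_t(x_0)\subset\mathbb R^n$: boundedness and closedness. For boundedness, I would argue that the standing hypothesis that trajectories exist on the whole interval $[t_0,t_1]$ under every admissible control, combined with continuity of $f$ and compactness of $\mathcal U$, forces a uniform bound: along any admissible trajectory $x(\cdot)$ with $x(t_0)=x_0$ one controls $\|f(x(s),u(s))\|$ by a continuous function of $\|x(s)\|$ (maximizing over the compact set $\mathcal U$), and a Gr\"onwall-type comparison then bounds $\sup_{s\in[t_0,t_1]}\|x(s)\|$ by a constant $M$ depending only on $x_0$, the length $t_1-t_0$, and $f$. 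Consequently the whole family $\{x(\cdot)\}$ of admissible trajectories is uniformly bounded and, since $\dot x=f(x,u)$ is then bounded on $[t_0,t_1]$, uniformly Lipschitz; in particular $R_t(x_0)$ lies in a fixed closed ball.

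The substance is closedness. I would take a sequence $\bar x_k\in R_t(x_0)$ with $\bar x_k\to\bar x$, each realized by an admissible trajectory $x_k(\cdot)$ and control $u_k(\cdot)$ with $x_k(t)=\bar x_k$. By the uniform boundedness and equicontinuity just established, Arzel\`a--Ascoli yields a subsequence (not relabeled) with $x_k\to x$ uniformly on $[t_0,t_1]$; then $x(t_0)=x_0$ and $x(t)=\bar x$, and it remains only to show that $x(\cdot)$ is itself an admissible trajectory. Since the derivatives $\dot x_k=f(x_k,u_k)$ are uniformly bounded in $L^\infty([t_0,t_1];\mathbb R^n)$, after a further subsequence $\dot x_k\rightharpoonup g$ weakly in $L^2$, and the uniform convergence $x_k\to x$ identifies $g=\dot x$ a.e. The key step is to upgrade this to the differential inclusion $\dot x(s)\in F(x(s))$ for a.e.\ $s$, where $F(y):=\{f(y,u):u\in\mathcal U\}$ is, by hypothesis, nonempty, compact and convex, and where $y\mapsto F(y)$ is upper semicontinuous because $f$ is continuous and $\mathcal U$ compact.

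To obtain this inclusion I would use a Mazur-lemma argument: finite convex combinations of the tails $\{\dot x_j\}_{j\ge k}$ can be chosen to converge to $\dot x$ strongly in $L^2$, hence a.e.\ along a subsequence. Fix a point $s$ of a.e.\ convergence; for every $\varepsilon>0$ and $k$ large, $\dot x_j(s)=f(x_j(s),u_j(s))$ lies in the $\varepsilon$-neighbourhood of $F(x(s))$ for all $j\ge k$ (by uniform convergence $x_j\to x$ and upper semicontinuity of $F$), hence so does every convex combination of such values, since the $\varepsilon$-neighbourhood of the convex set $F(x(s))$ is convex; letting $k\to\infty$ and then $\varepsilon\to 0$, closedness of $F(x(s))$ gives $\dot x(s)\in F(x(s))$. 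Finally, with $x$ absolutely continuous and $\dot x(s)\in F(x(s))$ a.e., Filippov's measurable-selection lemma \cite{filippov1962certain} (the Carath\'eodory map $(s,u)\mapsto f(x(s),u)-\dot x(s)$ admits a measurable zero-selection) produces a measurable $u(\cdot):[t_0,t_1]\to\mathcal U$ with $\dot x(s)=f(x(s),u(s))$ a.e. Thus $x(\cdot)$ is admissible, $\bar x=x(t)\in R_t(x_0)$, and $R_t(x_0)$ is closed; combined with boundedness this proves compactness.

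I expect the Mazur / weak-convergence step to be the main obstacle: it is precisely here that convexity of the velocity sets $F(x)$ is indispensable, since without it the weak limit of the $\dot x_k$ need not be a pointwise selection of $F(x)$ and the conclusion genuinely fails (\emph{chattering}-control examples produce limit curves that solve only the convexified inclusion). The boundedness estimate and the concluding measurable-selection step are comparatively routine, the latter being a standard invocation of Filippov's lemma once the inclusion $\dot x\in F(x)$ is in hand.
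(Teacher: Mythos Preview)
The paper does not prove this theorem: it is quoted verbatim as a classical result from Liberzon's textbook \cite{optimal_control_daniel_liberzon} and then applied, so there is no in-paper proof to compare against. Your sketch is essentially the standard textbook proof of Filippov's theorem---uniform bounds plus equicontinuity to get Arzel\`a--Ascoli precompactness of trajectories, Mazur's lemma together with the convexity hypothesis on the velocity sets $F(x)=\{f(x,u):u\in\mathcal U\}$ to pass weak limits into the differential inclusion $\dot x\in F(x)$, and Filippov's measurable-selection lemma to manufacture an admissible control realizing the limit trajectory---and you have correctly identified the convexity step as the crux.

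One caution on the boundedness part: the inference ``solutions exist on $[t_0,t_1]$ for every control, $f$ continuous, $\mathcal U$ compact $\Rightarrow$ uniform bound via a Gr\"onwall-type comparison'' is not automatic as written. Controlling $\|f(x,u)\|$ by a continuous function $\varphi(\|x\|)$ obtained by maximizing over $\mathcal U$ does not by itself give a Gr\"onwall estimate unless $\varphi$ has linear growth; standard statements of Filippov's theorem either assume such a growth condition explicitly or assume the trajectories remain in a fixed compact set. In the paper's specific application (the Riccati dynamics with $U_t$ in a compact matrix interval) the needed a~priori bound is easy to obtain directly, but if you intend a general proof you should either add a growth hypothesis or argue uniform boundedness more carefully.
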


Theorem~\ref{theo:filippov} is applicable to guarantee the existence of an optimal control for  \eqref{problem:optimization_problem_control}.
Specifically, one can convert the original Langrange-type problem \eqref{problem:optimization_problem_control} into an equivalent Mayer-type problem by introducing an auxiliary state $x_t^{\text{aux}}$ satisfying $x_{t_0}^{\text{aux}}=0$ and $\dot{x}_t^{\text{aux}}=\text{Tr}(X_t+\alpha U_t X_t)$. 
In the Mayer form, the original problem of minimizing \eqref{problem:optimization_problem_control_a} becomes the problem of minimizing $x_{t_1}^{\text{aux}}$ over the reachable set at $t=t_1$. Since the premises of Theorem~\ref{theo:filippov} are satisfied by the obtained Mayer-type problem, we can conclude that the reachable set at $t=t_1$ is compact. Therefore, Weierstrass' extreme value theorem guarantees the existence of an optimal solution. 

\subsection{Pontryagin Minimum Principle}
We next invoke a version of Pontryagin's Minimum Principle for the fixed-endtime free-endpoint optimal control problem \eqref{problem:optimal_control_problem}.
\begin{theorem}
\label{theo:pmp}
\normalfont [Theorem 5.10, \cite{athans2013optimal_book}] 
Suppose there exists an optimal solution to \eqref{problem:optimal_control_problem}. Let $u^\star_t:[t_0,t_1]\rightarrow \mathcal{U}$ be an optimal control input and $x^\star_t:[t_0,t_1]\rightarrow \mathbb{R}^n $ be the corresponding state trajectory. Then, there exists a function $p^\star_t:[t_0,t_1]\rightarrow \mathbb{R}^n$ such that the following conditions hold for the Hamiltonian $H$ defined as
\begin{align}
    H(x_t,p_t,u_t)=L(x_t,u_t)+p^\top_t f(x_t,u_t):
\end{align}
\begin{itemize}
\item[(i)] $x^\star_t$ and $p^\star_t$ satisfy the following canonical equations:
\begin{align}
    &\dot{x}^\star_t=\frac{\partial H(x^\star_t,p^\star_t,u^\star_t)}{\partial p_t},\quad\dot{p}^\star_t=-\frac{\partial H(x^\star_t,p^\star_t,u^\star_t)}{\partial x_t}\nonumber
\end{align}
with boundary conditions $x_{t_0}=x_0$ and $p_{t_1}=0$.
\item[(ii)] $\underset{u_t\in \mathcal{U}}{\mathrm{min}}\;\; H(x^\star_t,p^\star_t,u_t)=H(x^\star_t,p^\star_t,u^\star_t)$ for all $t\in[t_0,t_1]$.
\end{itemize}
\end{theorem}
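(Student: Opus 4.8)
The plan is to prove Theorem~\ref{theo:pmp} by the classical \emph{needle} (spike) variation argument, taking advantage of the free-endpoint, scalar-cost structure so that no separating-hyperplane machinery is needed. First I would pass to the Mayer form already used in Section~\ref{sec:existence}: augment the state to $\tilde{x}_t=(x_t,x_t^{\mathrm{aux}})$ with $\dot{x}_t^{\mathrm{aux}}=L(x_t,u_t)$, $x_{t_0}^{\mathrm{aux}}=0$, and dynamics $\dot{\tilde{x}}_t=\tilde{f}(\tilde{x}_t,u_t)\triangleq(f(x_t,u_t),L(x_t,u_t))$, so that the cost equals the terminal value $x_{t_1}^{\mathrm{aux}}$. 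Define the costate $p^\star_t$ as the solution of the linear adjoint equation $\dot{p}^\star_t=-\frac{\partial H}{\partial x}(x^\star_t,p^\star_t,u^\star_t)=-\big(\tfrac{\partial f}{\partial x}(x^\star_t,u^\star_t)\big)^{\!\top}p^\star_t-\tfrac{\partial L}{\partial x}(x^\star_t,u^\star_t)$ with terminal condition $p^\star_{t_1}=0$; equivalently, the adjoint $\tilde{q}^\star_t=(p^\star_t,\pi^\star_t)$ of the augmented system satisfies $\pi^\star_t\equiv1$ (since $\tilde{f}$ does not depend on $x^{\mathrm{aux}}$) and carries the terminal gradient of the Mayer cost, $\tilde{q}^\star_{t_1}=(0,1)$. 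Existence, uniqueness and absolute continuity of $p^\star_t$ follow from linearity together with the assumed continuity of $\partial f/\partial x$ and $\partial L/\partial x$ along the optimal trajectory, which is bounded because $\mathcal{U}$ is compact and $f,L$ are continuous.

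Next, fix a Lebesgue point $\tau\in(t_0,t_1)$ of the maps $s\mapsto f(x^\star_s,u^\star_s)$ and $s\mapsto L(x^\star_s,u^\star_s)$, fix an arbitrary $v\in\mathcal{U}$, and for small $\epsilon>0$ let $u^\epsilon_t$ equal $v$ on $[\tau-\epsilon,\tau]$ and $u^\star_t$ elsewhere; let $\tilde{x}^\epsilon_t$ be the corresponding trajectory. A Gronwall estimate gives $\tilde{x}^\epsilon_t=\tilde{x}^\star_t$ for $t\le\tau-\epsilon$ and, using the Lebesgue-point property at $\tau$, the expansion $\tilde{x}^\epsilon_\tau=\tilde{x}^\star_\tau+\epsilon\,\tilde{\xi}_\tau+o(\epsilon)$ with $\tilde{\xi}_\tau=\tilde{f}(\tilde{x}^\star_\tau,v)-\tilde{f}(\tilde{x}^\star_\tau,u^\star_\tau)$. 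On $(\tau,t_1]$ the two controls coincide again, so the variation obeys the linearized dynamics, $\tilde{x}^\epsilon_t=\tilde{x}^\star_t+\epsilon\,\tilde{\xi}_t+o(\epsilon)$ with $\dot{\tilde{\xi}}_t=\frac{\partial\tilde{f}}{\partial\tilde{x}}(\tilde{x}^\star_t,u^\star_t)\,\tilde{\xi}_t$. Because the adjoint equation is precisely the negative transpose of this variational equation, $t\mapsto(\tilde{q}^\star_t)^{\!\top}\tilde{\xi}_t$ is constant on $[\tau,t_1]$; evaluating it at $t_1$ (where, since $\tilde{q}^\star_{t_1}=(0,1)$, it equals the $x^{\mathrm{aux}}$-component of the variation) and at $\tau$ (where, since $\pi^\star\equiv1$, it equals $H(x^\star_\tau,p^\star_\tau,v)-H(x^\star_\tau,p^\star_\tau,u^\star_\tau)$) yields
\[
x^{\mathrm{aux},\epsilon}_{t_1}-x^{\mathrm{aux},\star}_{t_1}=\epsilon\big[H(x^\star_\tau,p^\star_\tau,v)-H(x^\star_\tau,p^\star_\tau,u^\star_\tau)\big]+o(\epsilon).
\]

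Optimality of $u^\star$ forces the left-hand side to be $\ge0$ for every $v\in\mathcal{U}$ and every small $\epsilon>0$; dividing by $\epsilon$ and letting $\epsilon\downarrow0$ gives $H(x^\star_\tau,p^\star_\tau,v)\ge H(x^\star_\tau,p^\star_\tau,u^\star_\tau)$. Since almost every $\tau\in(t_0,t_1)$ is such a Lebesgue point, this is the pointwise minimum condition~(ii) (extended to all $t$ after modifying $u^\star$ on a null set and using that the optimal-trajectory Hamiltonian is absolutely continuous). Condition~(i) is then immediate: $\dot{x}^\star_t=f(x^\star_t,u^\star_t)=\frac{\partial H}{\partial p}(x^\star_t,p^\star_t,u^\star_t)$ is the dynamics \eqref{eqn:dynamics}, $\dot{p}^\star_t=-\frac{\partial H}{\partial x}$ is the definition of the costate, $x^\star_{t_0}=x_0$ is \eqref{problem:optimal_control_problem_initial_condition}, and $p^\star_{t_1}=0$ is the transversality condition for a free right endpoint with no terminal cost.

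The main obstacle is making the first-order expansion $\tilde{x}^\epsilon_t=\tilde{x}^\star_t+\epsilon\tilde{\xi}_t+o(\epsilon)$ rigorous when $u^\star$ is merely measurable: one must restrict $\tau$ to Lebesgue points of the integrand, use uniform boundedness of the trajectories (available because $\mathcal{U}$ is compact and $f,L$ are continuous), and invoke Gronwall-type bounds to control the $o(\epsilon)$ remainder uniformly on $[t_0,t_1]$. A secondary subtlety, which does \emph{not} arise here, is that with terminal state constraints a single needle variation would not suffice and one would need a whole family of variations together with a separating-hyperplane argument on the cone of attainable cost-state perturbations; the free, unconstrained endpoint together with the scalar cost is exactly what lets the elementary variational inequality above close the proof.
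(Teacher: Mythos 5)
The paper does not prove Theorem~\ref{theo:pmp}; it invokes it as a cited textbook result (Theorem~5.10 of \cite{athans2013optimal_book}), so there is no in-paper proof to compare against. Your needle-variation argument is nevertheless a correct, self-contained proof of the free-endpoint, fixed-time minimum principle, and it follows the classical route that such texts take: pass to Mayer form, attach the augmented adjoint $\tilde{q}=(p,\pi)$ with $\pi\equiv1$ and terminal value $\tilde{q}_{t_1}=(0,1)$, propagate a single spike variation through the variational equation, use constancy of $t\mapsto\tilde{q}_t^{\top}\tilde{\xi}_t$ to convert the first-order terminal-cost increment into the Hamiltonian difference at the spike time, and invoke optimality to fix the sign. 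You also correctly identify the two places that require care for merely measurable $u^\star$ (restricting the spike location to Lebesgue points, then upgrading the a.e.\ minimum condition to an everywhere statement by modifying on a null set), and you correctly observe that the free endpoint with scalar terminal cost is exactly what lets a single needle variation close the argument without the separating-hyperplane machinery needed in the constrained-endpoint case. One small consistency check worth making explicit when you write this up: the terminal condition $p^\star_{t_1}=0$ is precisely the $p$-component of $\tilde{q}_{t_1}=(0,1)$, i.e., the transversality condition for a free right endpoint with no Mayer cost on the original state, which you do state but could tie more tightly to the augmented picture. Overall the proposal is sound; it simply supplies a proof the paper chose to cite rather than reproduce.
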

For our problem \eqref{problem:optimization_problem_control}, the Hamiltonian is defined as 
\begin{align}
    H(X_t,P_t,U_t)=&\mathrm{Tr}(X_t+\alpha U_t X_t)+\nonumber\\
    &\left<P_t, AX_t+X_t A^\top-X_tU_tX_t+BB^\top \right>\nonumber\\
    =&\mathrm{Tr}(P_t(AX_t+X_t A^\top+BB^\top))+\mathrm{Tr}(X_t)\nonumber\\
    &+\mathrm{Tr}((\alpha X_t-X_tP_tX_t)U_t).\nonumber
\end{align}
Thus, the necessary optimality condition provided by Theorem~\ref{theo:pmp} is given by the canonical equations
\begin{subequations}
\begin{align}
    &\dot{X}_t=AX_t+X_tA^\top-X_tU_tX_t+BB^\top \\
    &\dot{P}_t=P_tX_tU_t+U_tX_tP_t\!-\!P_tA\!-\!A^\top P_t\!-\!I\!- \!\alpha U_t
\end{align}
\label{eqn:canonical_vector_case}
\end{subequations}
with boundary conditions $X_{t_0}=X_0$ and $P_{t_1}=0$, and
\begin{align}
    U_t^\star=\underset{U_t\in\mathcal{U}}{\mathrm{argmin}}\;\mathrm{Tr}[(\alpha X_t-X_tP_tX_t)U_t]
    \label{eqn:u_star_matrix}
\end{align}
where $\mathcal{U}=\{U\in\mathbb{S}_+^n: U\preceq \gamma I \}$.

\section{Optimal solution: Scalar case}
\label{sec:scalar}
In this section, we restrict our attention to a special case with scalar systems (i.e., $n=1$) to be able to solve the optimality condition \eqref{eqn:canonical_vector_case}
and \eqref{eqn:u_star_matrix} explicitly.
In what follows, we assume $A=a<0$ and $B=1$. 
The canonical equations \eqref{eqn:canonical_vector_case} are simplified as
\begin{subequations}
\label{eqn:region_2_governing_equations_xp}
\begin{align}
    &\dot{x}_t=2ax_t-x_t^2u_t+1\label{eqn:region_2_governing_equations_x}\\
    & \dot{p}_t=2x_tp_tu_t-2ap_t-1-\alpha u_t\label{eqn:region_2_governing_equations_p}
\end{align}
\label{eqn:canonical}
\end{subequations}
with $x_{t_0}=x_0$ and $p_{t_1}=0$. Due to the original meaning of $x_0$ as a covariance of the initial value of the underlying process $\mathbf{x}_0$, we assume that $x_0\geq 0$. Since \eqref{eqn:region_2_governing_equations_x} is a monotone system in the sense of \cite{angeli2003monotone}, we have $x_t\geq 0$ for all $t\in[t_0,t_1]$. The optimal control $u_t$ is given by
\begin{align}
    u_t^\star&=\underset{0\leq u_t\leq \gamma}{\mathrm{argmin}}\;\; x_t(\alpha-x_tp_t)u_t\nonumber\\
    &= \left\{ \begin{array}{l}
0\quad\text{if}\quad p_tx_t<\alpha
\\
\gamma \quad\text{if}\quad p_tx_t>\alpha
\\
u^\star\in[0,\gamma]\quad\text{if}\quad p_tx_t=\alpha
\end{array}
\right.
\label{eqn:u_star_argmin}
\end{align}
The main result of this section is summarized in the next theorem:
\begin{theorem}
\normalfont For any $x_0>0$, $\alpha>0$, $a<0$ and specified time interval $[t_0,t_1]$, an optimal control exists and satisfies \eqref{eqn:region_2_governing_equations_xp} and \eqref{eqn:u_star_argmin}.
If $0\leq 2a/\sqrt{\alpha}+1/\alpha \leq \gamma$, the optimal control is a piecewise constant function that can take three possible values $u_t=0, 2a/\sqrt{\alpha}+1/\alpha, \gamma$. Otherwise, the optimal control is a piecewise constant function that can take two possible values $u_t=0, \gamma$ (bang-bang control). 
In all cases, the optimal control has at most two discontinuities.
\label{theorem:main_theorem}
\end{theorem}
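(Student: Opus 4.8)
The plan is to analyze the switching structure of the optimal control by studying the scalar canonical system \eqref{eqn:region_2_governing_equations_xp} together with the minimization rule \eqref{eqn:u_star_argmin}. Existence follows immediately from the discussion in Section~\ref{sec:existence} (Filippov's theorem plus Weierstrass), so the substance is the claim that the optimal $u_t^\star$ is piecewise constant with at most two discontinuities. First I would introduce the switching function $\sigma_t \triangleq \alpha - x_t p_t$ (equivalently track $s_t \triangleq x_t p_t$), so that $u_t^\star = 0$ when $\sigma_t > 0$, $u_t^\star = \gamma$ when $\sigma_t < 0$, and singular arcs can only occur on the set where $\sigma_t \equiv 0$. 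The key is to understand (a) how many times $\sigma_t$ can change sign on $[t_0,t_1]$, and (b) what happens on a singular arc.

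Next I would compute $\dot\sigma_t$ (equivalently $\dot s_t = \dot x_t p_t + x_t \dot p_t$) by substituting the canonical equations; the terms involving $u_t$ should cancel in a favorable way, leaving $\dot s_t$ expressed only in terms of $x_t, p_t, a$. The expectation (this is the standard phenomenon for bang-bang problems that are ``first-order singular'') is that $\dot s_t$ depends on the state but not on the control, so that wherever $s_t = \alpha$ the sign of $\dot s_t$ is determined — if it is strictly signed there, then $s_t$ crosses the threshold $\alpha$ transversally and can do so at most... well, that alone does not bound the number of crossings, so I would need a monotonicity or Lyapunov-type argument. Here I would use the boundary conditions: $p_{t_1} = 0$ forces $s_{t_1} = 0 < \alpha$, so near the terminal time $u_t^\star = 0$; and on the arc where $u_t^\star = 0$, the $p$-equation becomes the linear ODE $\dot p_t = -2ap_t - 1$ with $p_{t_1}=0$, which I can solve in closed form ($p_t = \frac{1}{2a}(1 - e^{2a(t_1-t)})$, negative since $a<0$ — wait, need to check sign; since $a<0$, $e^{2a(t_1-t)} \le 1$ for $t\le t_1$, so $p_t \ge 0$), and similarly $x_t$ solves a Riccati equation with $u\equiv 0$, i.e. $\dot x_t = 2a x_t + 1$, also solvable in closed form. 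With both $x_t$ and $p_t$ explicit on the final arc, I can compute exactly when $s_t = x_t p_t$ first reaches $\alpha$ going backward in time, and whether it does so at most once; the condition $0 \le 2a/\sqrt\alpha + 1/\alpha \le \gamma$ in the theorem statement is almost certainly the condition under which the singular value $u^\star = 2a/\sqrt\alpha + 1/\alpha$ lies in the admissible interval $[0,\gamma]$, which suggests that on a singular arc the system sits at the equilibrium of \eqref{eqn:region_2_governing_equations_x} with a constant $x_t = 1/\sqrt\alpha$ (so that $x_t p_t = \alpha$ forces $p_t = \alpha\sqrt\alpha$, and $\dot{(x_tp_t)} = 0$ pins down $u^\star$). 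I would verify this by setting $s_t \equiv \alpha$ and $\dot s_t \equiv 0$ and solving for $u_t$.

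Then the structure emerges: going backward from $t_1$, the control is $0$ until $s_t$ hits $\alpha$; at that point either it enters a singular arc at the constant value $2a/\sqrt\alpha+1/\alpha$ (when that value is admissible) or it switches to $\gamma$ (bang-bang case); and going further back, at most one more switch to $\gamma$ (from the singular arc) or possibly back to $0$ near $t_0$ depending on $x_0$. I would pin down the exact sequence of arcs by a phase-plane analysis in the $(x,p)$ plane: identify the region $\{xp < \alpha\}$ where $u=0$ and $\{xp>\alpha\}$ where $u=\gamma$, show the vector field on each side, show that a trajectory satisfying $p_{t_1}=0$ can cross the curve $xp=\alpha$ at most the claimed number of times, and that any interior segment on $xp=\alpha$ is the singular arc. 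The main obstacle I anticipate is exactly this global crossing count — ruling out trajectories that oscillate across the switching surface $\{xp=\alpha\}$ many times. I would handle it by showing $\dot s_t$ has a definite sign on one side of the surface near the surface (a transversality argument: on $\{s_t = \alpha\}$, $\dot s_t$ is either identically zero, which is the singular arc, or strictly signed, so each non-singular crossing is in the same direction), which combined with the explicit backward-in-time solution on the $u=0$ arc caps the number of discontinuities at two. The remaining bookkeeping — checking that with $x_0>0$ the forward trajectory indeed realizes one of the listed patterns, and that the listed values of $u_t$ are the only ones attained — is routine substitution into the closed-form expressions obtained above.
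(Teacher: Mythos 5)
Your approach matches the paper's: analyze the switching function $s_t = x_t p_t$ against the threshold $\alpha$, locate the singular arc, and then carry out a phase-plane analysis of the canonical equations with closed-form solutions on each bang arc. The idea of propagating the terminal condition $p_{t_1}=0$ (hence $s_{t_1}=0<\alpha$, hence $u_t=0$ near $t_1$) backward in time is exactly how the paper's Algorithms 1--3 are organized, so the overall architecture is right. That said, two issues need attention.

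First, a computational point you should fix: differentiating $s_t = x_t p_t$ along \eqref{eqn:region_2_governing_equations_xp} gives $\dot s_t = x(xp-\alpha)u + p - x$. The control terms cancel only \emph{on} the surface $\{xp=\alpha\}$, not identically as you conjectured, and there $\dot s_t = p - x = \alpha/x - x$. Requiring $\dot s_t = 0$ on a singular arc therefore forces $x = p = \sqrt\alpha$ (not $x=1/\sqrt\alpha$); substituting into $\dot x_t=0$ then recovers the singular value $u^\star = 2a/\sqrt\alpha + 1/\alpha$, consistent with the threshold condition in the statement.

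Second, and this is the substantive gap, the transversality observation alone does not cap the switch count at two. On the surface, $\dot s_t = \alpha/x - x$ is positive for $x<\sqrt\alpha$ and negative for $x>\sqrt\alpha$, so crossings go upward on one portion and downward on the other; ``each non-singular crossing is in the same direction'' is false globally, and a priori a trajectory could shuttle across $\mathcal{S}$ repeatedly. What the consistent-sign observation actually buys (and is used for in the paper) is that there is no chattering, so Carath\'eodory solutions suffice. The ``at most two discontinuities'' conclusion instead comes from a global enumeration: the paper splits into three cases depending on whether the unique stationary point of the closed-loop vector field lies in Region 1 ($\alpha>1/4a^2$), on the surface ($(a+\sqrt{a^2+\gamma})^2/\gamma^2\le\alpha\le1/4a^2$), or in Region 3 ($\alpha<(a+\sqrt{a^2+\gamma})^2/\gamma^2$), then uses the separatrices through that stationary point together with the explicit flows \eqref{eqn:region1_xp} and \eqref{eqn:region3_xp} to classify, subcase by subcase, every trajectory compatible with $x_{t_0}=x_0\ge0$ and $p_{t_1}=0$. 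Your proposal gestures at this phase-plane classification but defers the step that actually delivers the ``at most two'' bound; that step is the bulk of the paper's proof and cannot be replaced by the local transversality argument.
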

\begin{proof}
The existence of an optimal solution follows from the discussion in Section~\ref{sec:existence}.
The rest of the statement will be established in Sections~\ref{subsec:phase_potrait} and Section~\ref{sec:analytical_solution} below. Explicit expressions for the optimal control are also given in Section~\ref{sec:analytical_solution}.
\end{proof}

\subsection{Phase Portrait analysis\label{subsec:phase_potrait}}

\begin{figure}[t]
\centering
\includegraphics[width=7.5cm]{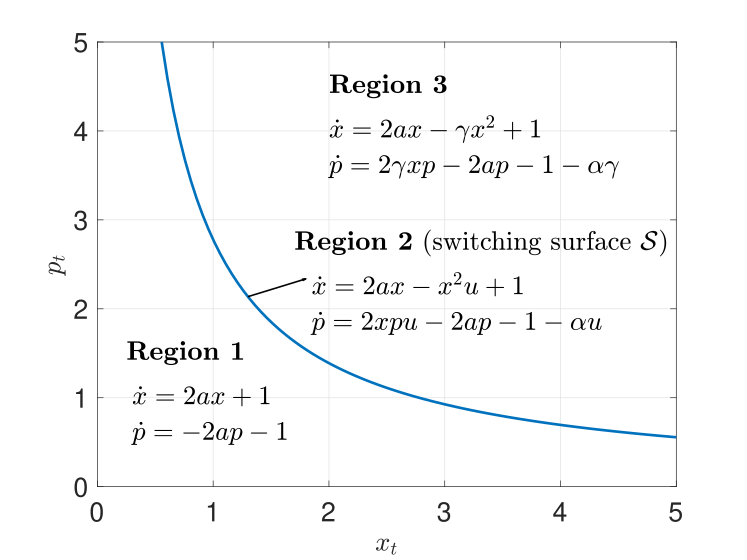}
\caption{Illustration of different regions in the phase space.}
\label{fig:regions}
\end{figure}
To analyze the canonical equations \eqref{eqn:canonical}, consider the vector field defined by the right-hand side of \eqref{eqn:canonical} and \eqref{eqn:u_star_argmin}. Because of the classification in \eqref{eqn:u_star_argmin}, the vector field is discontinuous on the switching surface $\mathcal{S}$ characterized by $p_tx_t=\alpha$. We divide the domain $\{(x,p)|x\geq0\}$ into Regions 1, 2 and 3 as shown in Fig.~\ref{fig:regions}. In Fig.~\ref{fig:regions}, we only show the positive orthant $\{(x,p)|x\geq0, p\geq 0\}$ since the region $\{(x,p)|x\geq0, p< 0\}$ plays no role in the following analysis.
Denote the vector field in Regions 1, 2 and 3 (see Fig. \ref{fig:cases}) as $f_1, f_2$ and $f_3$, respectively. From \eqref{eqn:canonical} and \eqref{eqn:u_star_argmin}, we have
\begin{align}
    f_1 :& \begin{cases} \dot{x}_t=2ax_t+1 \\
    \dot{p}_t=-2ap_t-1 \end{cases} \label{eqn:region1_xp_f1} \\
    f_2 :& \begin{cases}\dot{x}_t=2ax_t-x_t^2u_t+1 \\
    \dot{p}_t=2x_tp_tu_t-2ap_t-1-\alpha u_t \end{cases}  \label{eqn:region2_xp_f2} \\
    f_3 :& \begin{cases} \dot{x}_t=2ax_t-\gamma x_t^2+1\\
    \dot{p}_t=2\gamma x_tp_t-2ap_t-1-\alpha \gamma. \end{cases} \label{eqn:region3_xp_f3}
\end{align}
\subsubsection{Local solutions in Regions 1 and 3} 
The vector field in Region 1 is characterized by the linear differential equation \eqref{eqn:region1_xp_f1} whose general solution is given by
\begin{align}
    & x_t=\frac{k_1e^{2at}-1}{2a},\quad p_t=\frac{k_2e^{-2at}-1}{2a}
    \label{eqn:region1_xp}
\end{align}
where $k_1$ and $k_2$ are constants. 
On the other hand, the vector field in Region 3 characterized by \eqref{eqn:region3_xp_f3} is nonlinear. However,  \eqref{eqn:region3_xp_f3} belongs to the class of scalar Riccati differential equations and admits an analytical solution given as follows:
\begin{subequations}
\begin{align}
      &x_t=\frac{1}{\gamma} \left(a+c-\frac{2c}{k_3e^{2 ct}+1} \right) \label{eqn:region3_x}\\
      &    p_{t}= k_4\frac{\left(k_3e^{2 ct}+1\right)^2}{e^{2 ct}}+ \frac{(1+\alpha \gamma)\left(k_3e^{2 ct}+1\right)}{2ck_3e^{2 ct}}.\label{eqn:region3_p}
\end{align}
\label{eqn:region3_xp}
\end{subequations}
Here, $k_3$, $k_4$ are constants and $c=\sqrt{a^2+\gamma}$.
\subsubsection{Stationary points} 
The location of a stationary point in the phase portrait changes depending on the value of $\alpha$. Noticing that  $0<(a+\sqrt{a^2+\gamma})^2/\gamma^2<1/4a^2$ for all $a<0$ and $\gamma>0$, the following three cases can occur:
\begin{itemize}
    \item Case A: $1/4a^2<\alpha$. In this case, the phase portrait has a unique stationary point in Region 1 located at 
    \begin{align}
    E=(x_e,p_e)=(-1/2a,-1/2a). \label{eqn:equilibrium_region_1}
    \end{align}
    It is not possible for $f_2$ to have a stationary point in Region 2 no matter what value of $u_t\in[0,\gamma]$ is chosen. A stationary point cannot exist in Region 3 either.
    \item Case B: $(a+\sqrt{a^2+\gamma})^2/\gamma^2\leq\alpha\leq 1/4a^2$. In this case, a stationary point cannot exist in Region 1 or in Region 3. However, the point 
    \begin{align}
    E=(x_e,p_e)=(\sqrt{\alpha},\sqrt{\alpha})\label{eqn:equilibrium_region_2}
    \end{align}
    in Region 2 is a stationary point if $u_t$ is set to
    \begin{equation}
    \label{eq:u_star}
    u^\star=2a/\sqrt{\alpha}+1/\alpha.
    \end{equation}
    From the present assumption that $(a+\sqrt{a^2+\gamma})^2/\gamma^2 \leq\alpha\leq 1/4a^2$, it can be shown that the value of $u^\star$ in \eqref{eq:u_star} satisfies $0\leq u^\star \leq \gamma$. No other point in Region 2 can be a stationary point. 
    \item Case C: $\alpha<(a+\sqrt{a^2+\gamma})^2/\gamma^2$. In this case, the phase portrait has a unique stationary point 
    \begin{align}
    (x_e,p_e)=\left(\frac{a+\sqrt{a^2+\gamma}}{\gamma},\frac{1+\alpha\gamma}{2\sqrt{a^2+\gamma}}\right)\label{eqn:equilibrium_region_3}
    \end{align}
    in Region 3. No stationary point can exists in Regions 1 and 2.
\end{itemize}
The vector field in each case is depicted in Fig.~\ref{fig:cases}.

\begin{figure*}[ht]
 \captionsetup[subfigure]{justification=centering}
 \centering
 \begin{subfigure}{0.33\textwidth}
{\includegraphics[scale=0.16]{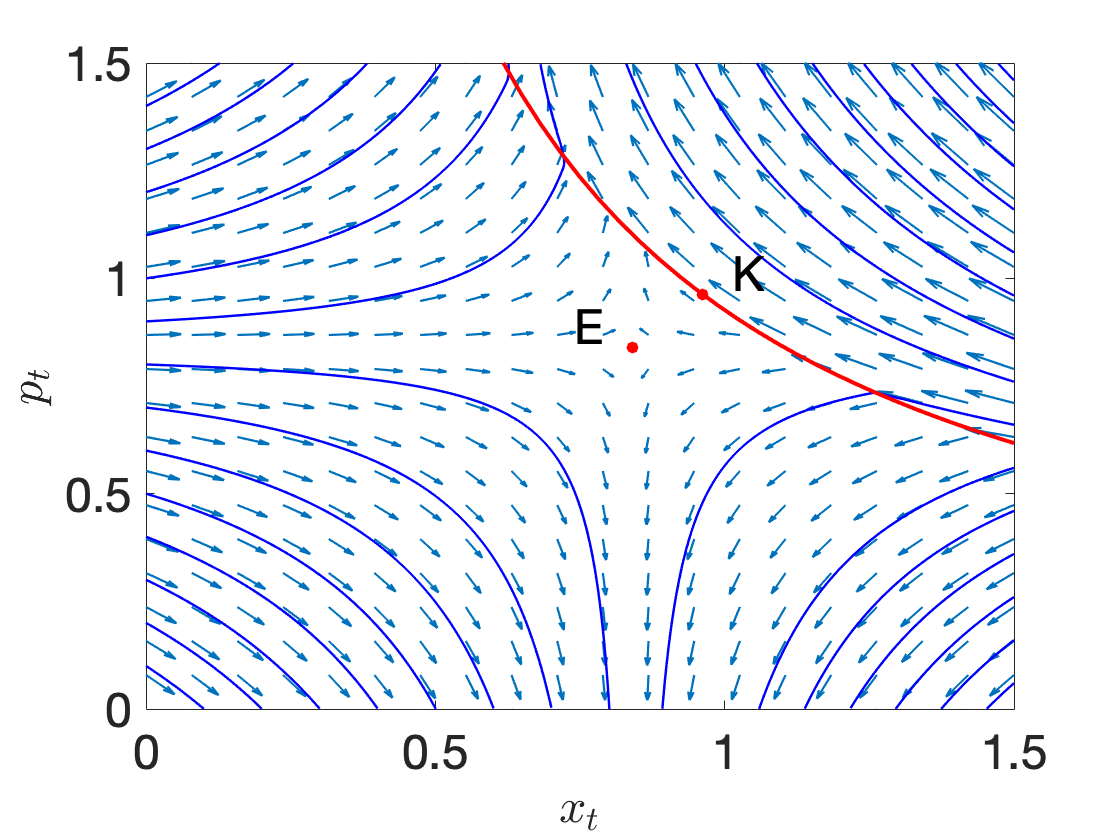}}
\caption{Case A ($\alpha=0.926$)}
\label{fig:casea}
 \end{subfigure}
\label{fig:approximation_error_x_position}
 \begin{subfigure}{0.33\textwidth}
{\includegraphics[scale=0.16]{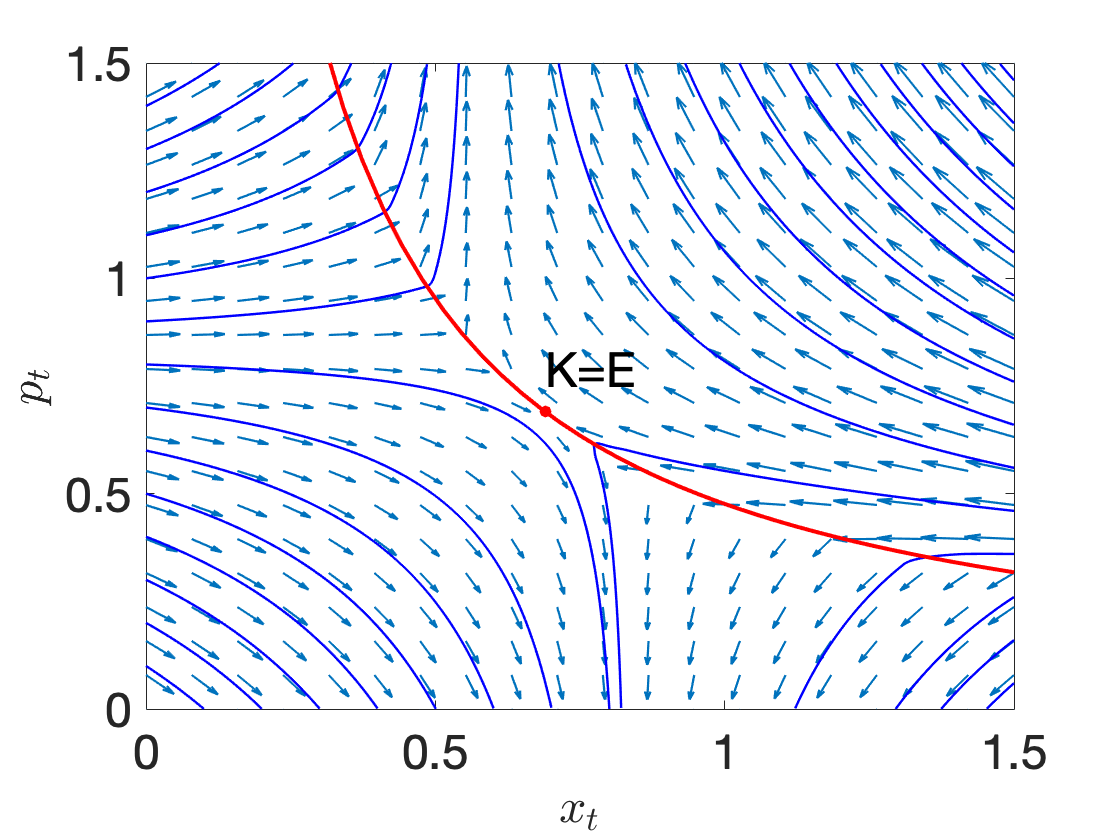}}
 \caption{Case B ($\alpha=0.476$)}
\label{fig:caseb}
 \end{subfigure}
\label{fig:approximation_error_v_velocity}
 \begin{subfigure}{0.32\textwidth}
{\includegraphics[scale=0.16]{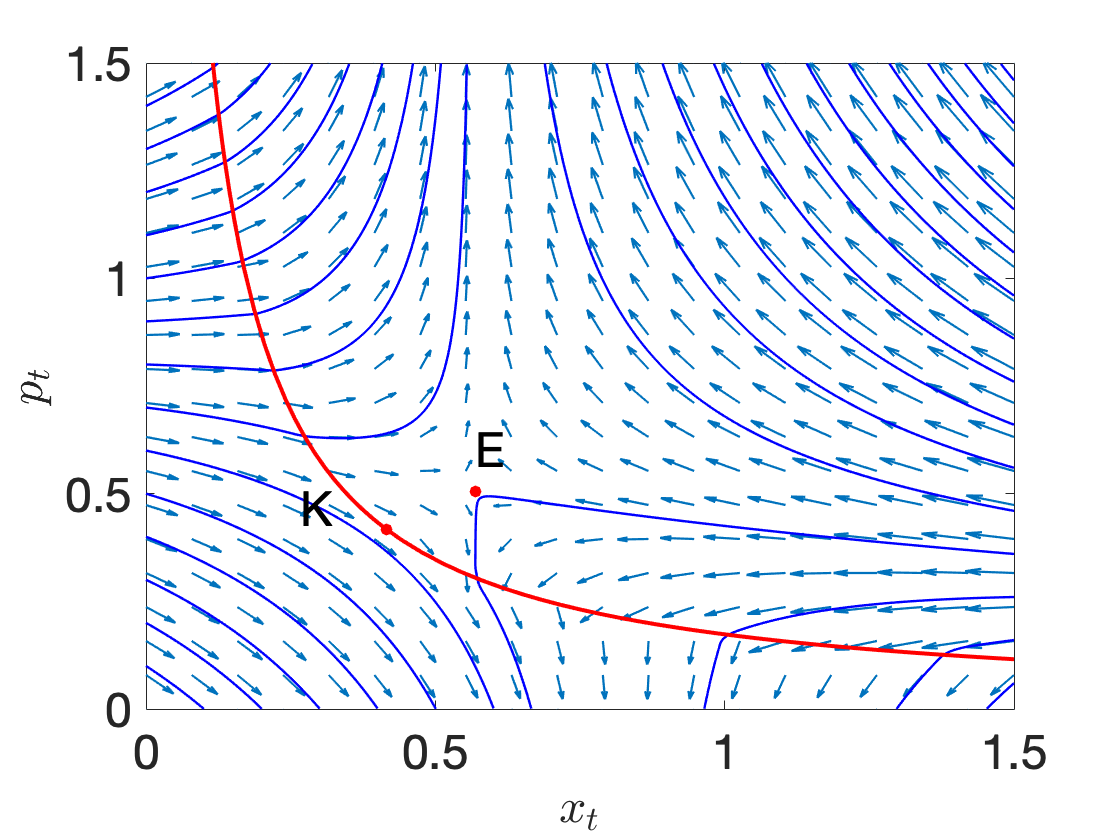}}
\caption{Case C ($\alpha=0.173$)}
\label{fig:casec}
\end{subfigure}
 \caption{Phase portraits for various cases ($a=-0.595,\;\gamma=1$).}
\label{fig:cases}
\end{figure*}

\subsubsection{Switching behavior}
To understand the behavior of the solution near the switching surface $\mathcal{S}$, we need to analyze the directions of $f_1, f_2$ and $f_3$ with respect to $\mathcal{S}$ in the neighborhood of $\mathcal{S}$. Noticing that $\mathcal{S}$ is a level set of the function $V(x,p)=xp$, this can be studied by checking the signs of the Lie derivatives $L_{f_1}V$, $L_{f_2}V$ and $L_{f_3}V$ evaluated on $\mathcal{S}$. Notice that
\begin{align}
    L_{f_1}V&=\frac{\partial V}{\partial x}\dot{x}+\frac{\partial V}{\partial p}\dot{p}=p\dot{x}+x\dot{p}\nonumber\\
    &=p(2ax+1)-x(2ap+1)=p-x\nonumber \\
    L_{f_2}V&=p\dot{x}+x\dot{p}=x^2pu+p-x-\alpha xu\nonumber \\      
L_{f_3}V&=p\dot{x}+x\dot{p}\nonumber\\
    &=p(2ax-\gamma x^2+1)+x(2\gamma xp-2ap-1-\alpha\gamma)\nonumber\\
        &=\gamma x^2p+p-x-\alpha\gamma x. \nonumber
\end{align}
Therefore, on the surface $\mathcal{S}$ (i.e., when $px=\alpha$), we have
\[
L_{f_1}V|_\mathcal{S}=L_{f_2}V|_\mathcal{S}=L_{f_3}V|_\mathcal{S}=\alpha/x-x.
\]
This indicates that all the vector fields $f_1$, $f_2$ and $f_3$ define a consistent direction with respect to $\mathcal{S}$ everywhere on $\mathcal{S}$. Namely, they cross $\mathcal{S}$ ``upward" in the portion where $x < \sqrt{\alpha}$, ``downward" where $x > \sqrt{\alpha}$, and are tangential to $\mathcal{S}$ at the point 
\[
K=(\sqrt{\alpha},\sqrt{\alpha}).
\]
The point $K$ becomes a stationary point when $(a+\sqrt{a^2+\gamma})^2/\gamma^2\leq\alpha\leq 1/4a^2$ (Case B) and $u_t$ is set to be \eqref{eq:u_star}. This coincides with $E$ defined by \eqref{eqn:equilibrium_region_2}. 

The analysis above has the important implication that the phase portraits in Fig.~\ref{fig:cases} are free from the ``chattering" solutions in all cases, and that the solution concept of Caratheodory \cite{discontinuous_dynamical_system_cortes} is sufficient to describe the solutions crossing the switching surface $\mathcal{S}$.  
However, special attention is needed to Case B where the uniqueness of the solution is lost. To see this, consider the family of the trajectories $(x_t, p_t)$ that ``stay" on $E=K$ for an arbitrary duration as follows:
\begin{itemize}
    \item $(x_t, p_t)$ solves \eqref{eqn:region1_xp_f1} or \eqref{eqn:region3_xp_f3} for $t_0\leq t \leq t'$ with  $(x_{t'}, p_{t'})=(\sqrt{\alpha}, \sqrt{\alpha})$;
    \item $(x_t, p_t)=(\sqrt{\alpha}, \sqrt{\alpha})$ for $t'\leq t \leq t''$;
    \item $(x_t, p_t)$ solves \eqref{eqn:region1_xp_f1} or \eqref{eqn:region3_xp_f3} for $t'' \leq t_1$ with  $(x_{t''}, p_{t''})=(\sqrt{\alpha}, \sqrt{\alpha})$.
\end{itemize}
It is easy to check that all these trajectories are Caratheodory solutions to the canonical equations regardless of the choice of $t'$ and $t''$. 

\subsection{Analytical solution \label{sec:analytical_solution}}
Using the phase portraits depicted in Fig.~\ref{fig:cases}, we now solve the boundary value problem \eqref{eqn:canonical} and \eqref{eqn:u_star_argmin} with the initial state condition $x_{t_0}=x_0 (\geq 0)$ and the terminal costate condition $p_{t_1}=0$. In what follows, the solution to this boundary value problem is simply referred to as the optimal solution. It is convenient to consider Cases A, B and C separately.

\subsubsection{Case A}
In this case, the initial coordinate of $(x_{t_0}^\star,p_{t_0}^\star)$ of the optimal solution is either in the green or the pink regions illustrated in Fig.~\ref{fig:case_a_color}. The boundaries of these regions are defined by the switching surface $\mathcal{S}$ and the separatrices converging to the point $E$.
\subsubsection*{\underline{Subcase A-1} ($(x_{t_0}^\star,p_{t_0}^\star)$ is in the green region)}
Consider the particular solution to the vector field $f_1$ satisfying the boundary conditions $x_{t_0}=x_0$ and $p_{t_1}=0$:
\begin{subequations}
\label{eqn:xp_case1}
    \begin{align}
   & \bar{x}_t=\frac{(2ax_0+1)e^{2a(t-t_0)}-1}{2a}\label{eqn:x_case1}\\
   & \bar{p}_t=\frac{e^{-2a(t-t_1)}-1}{2a}.\label{eqn:p_case1}
\end{align}
\end{subequations}
If we compute $(\bar{x}_{t_0}, \bar{p}_{t_0})$ from \eqref{eqn:xp_case1} and find it is in Region 1 (i.e., $\bar{x}_{t_0}\bar{p}_{t_0} \leq \alpha$), then Subcase A-1 applies.  
Since the optimal solution is entirely in Region 1 there is no switching.
\begin{figure*}[ht]
 \captionsetup[subfigure]{justification=centering}
 \centering
 \begin{subfigure}{0.33\textwidth}
{\includegraphics[width=6.2cm]{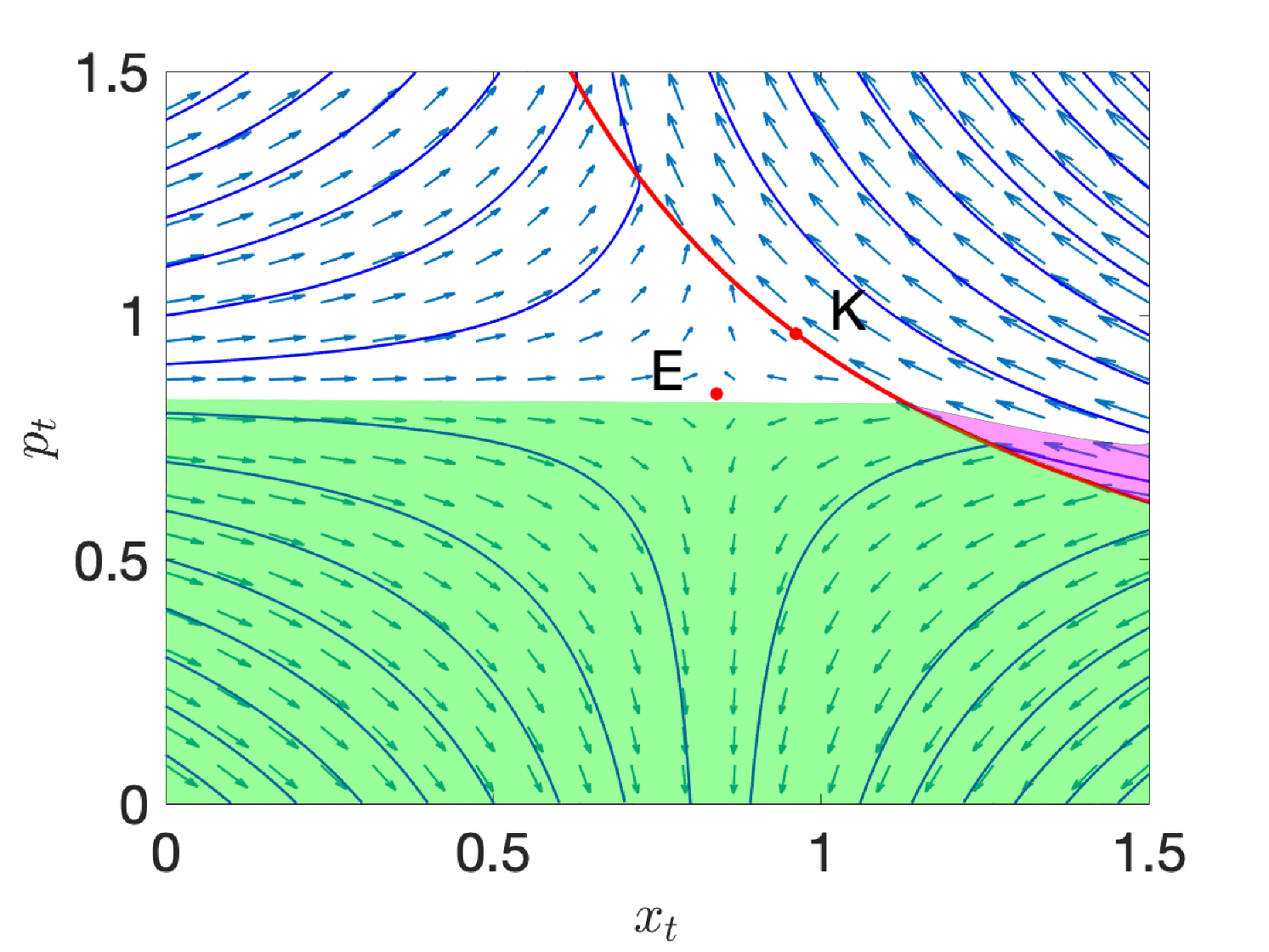}}
\caption{Case A.}
\label{fig:case_a_color}
 \end{subfigure}
 \begin{subfigure}{0.33\textwidth}
 \includegraphics[width=6.2cm]{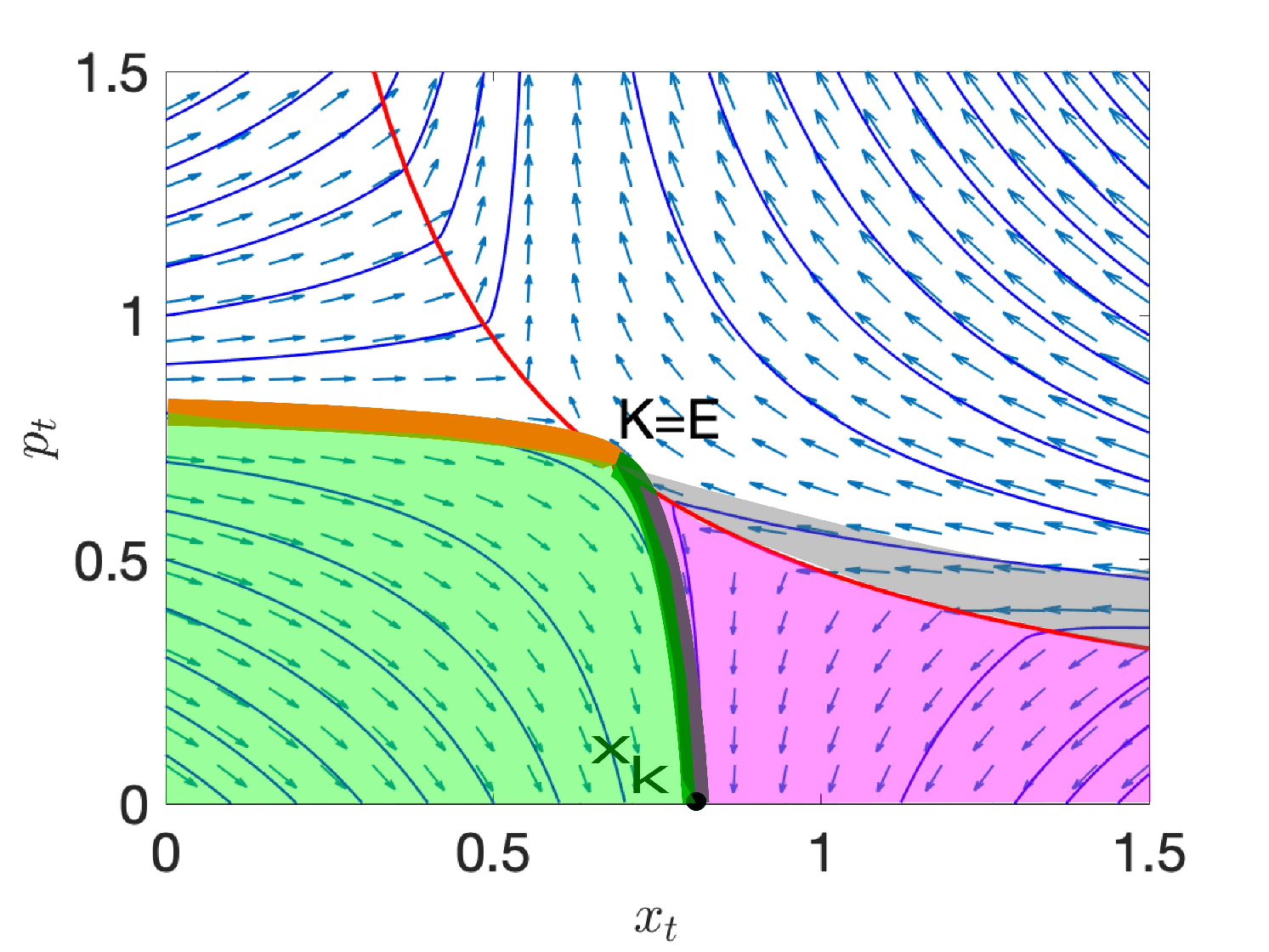}
 \caption{Case B.}
\label{fig:case_b_color}
 \end{subfigure}
 \begin{subfigure}{0.32\textwidth}
 \includegraphics[width=6.2cm]{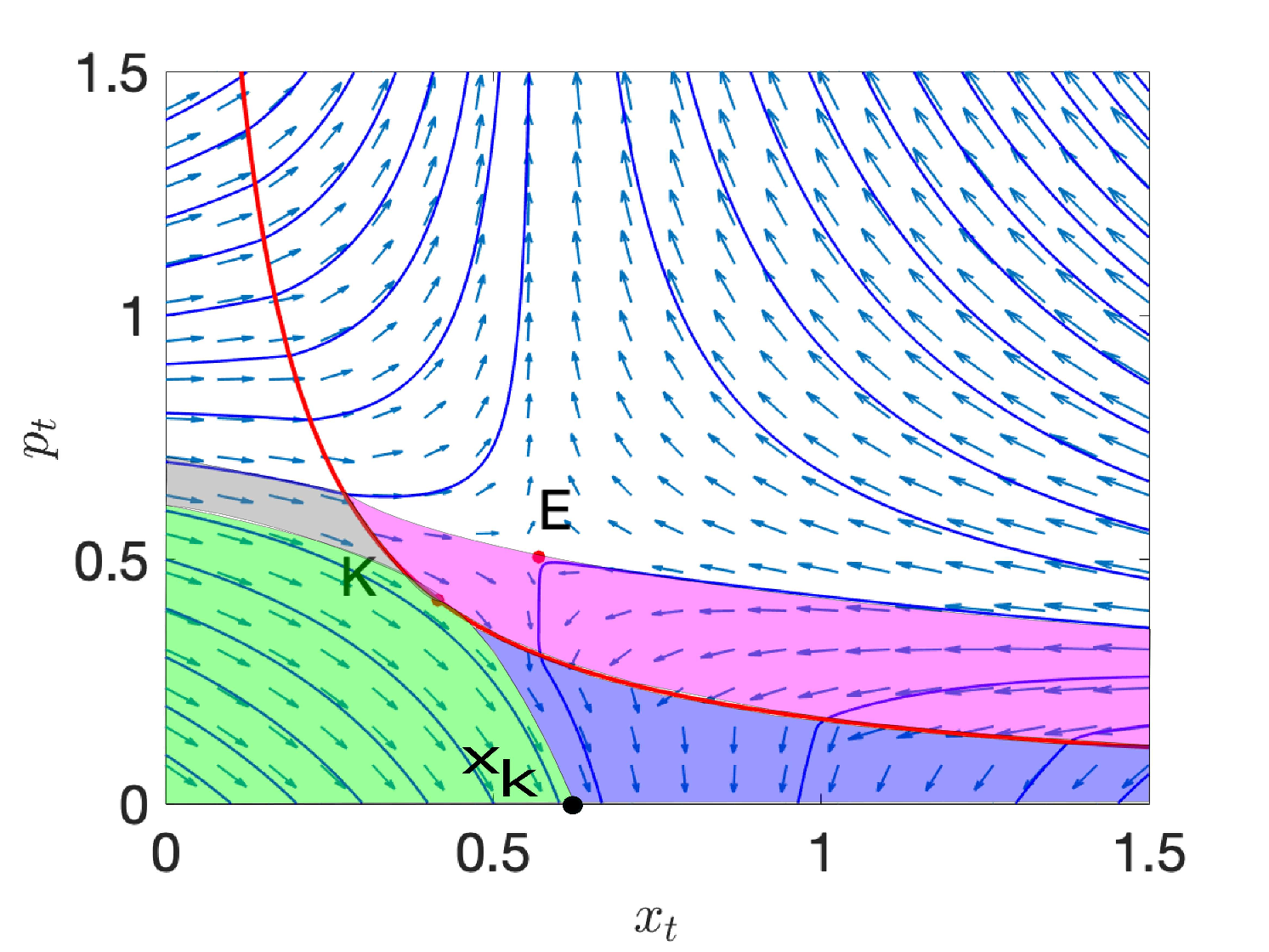}
\caption{Case C.}
\label{fig:case_c_color}
\end{subfigure}
 \caption{The colored regions depicting different subcases for every particular case.}
\label{fig:cases_color}
\end{figure*}

\subsubsection*{\underline{Subcase A-2} ($(x_{t_0}^\star,p_{t_0}^\star)$ is in the pink region)} In this case, there exists a switching time $t'\in (t_0, t_1)$. To compute $t'$, notice that a particular solution \eqref{eqn:region3_x} satisfying the initial condition $x_{t_0}=x_0$ is given by
\begin{equation}
      \hat{x}_t=\frac{1}{\gamma}\left(a+c-\frac{2c}{k_3e^{2 ct}+1}\right) \label{eqn:x_case2}
\end{equation}
where $c=\sqrt{a^2+\gamma}$ and
$k_3=\frac{c-a+x_0\gamma}{c+a-x_0\gamma}e^{-2ct_0}$.
On the other hand, the particular solution \eqref{eqn:region1_xp} satisfying $p_{t_1}=0$ is given by \eqref{eqn:p_case1}.
Thus, at the switching time $t'$, it must be that $\hat{x}_{t'} \bar{p}_{t'}=\alpha$, or
\begin{align}
\frac{1}{\gamma}\left(a+c-\frac{2c}{k_3e^{2 ct'}+1}\right)\left(\frac{e^{-2a(t'-t_1)}-1}{2a}\right)=\alpha.
\label{eqn:caseA_switching_time}
\end{align}
Therefore, $t'$ can be computed by solving a nonlinear equation \eqref{eqn:caseA_switching_time}. The procedure to compute the optimal control $u^\star_{[t_0, t_1]}$  is summarized in Algorithm~\ref{alg:case_a}.
\begin{algorithm}[H]
 \caption{Optimal solution for case A}
\hspace*{\algorithmicindent} \textbf{Input:} $a<0$, $\alpha$, $x_0$, $\gamma$, $t_0$ and $t_1$  \\
 \hspace*{\algorithmicindent} \textbf{Output:} $u_t^\star$
\begin{algorithmic}[1]
\State Compute $(\bar{x}_{t_0}, \bar{p}_{t_0})$ from \eqref{eqn:xp_case1};
\If {$\bar{x}_{t_0}\bar{p}_{t_0}\leq \alpha$}\Comment{Subcase A-1}
\State $u_t^\star\gets\{0_{[t_0,t_1]}\}$;
\Else\Comment{Subcase A-2}
\State $t'\gets\text{solve \eqref{eqn:caseA_switching_time} for $t'$}$;
\State $u_t^\star\gets\{\gamma_{[t_0,t']},0_{[t',t_1]}\}$;
\EndIf
 \end{algorithmic}
\label{alg:case_a}
\end{algorithm}

\subsubsection{Case B}
The initial coordinate $(x_{t_0}^\star,p_{t_0}^\star)$ of the optimal solution is in the colored region in
Fig.~\ref{fig:case_b_color}. Let $x_K$ be the $x$-coordinate at which a particular solution \eqref{eqn:region1_xp} to the vector field $f_1$ that passes through $K=E=(\sqrt{\alpha},\sqrt{\alpha})$ at a certain time $t'' (< t_1)$ reaches at the terminal time $t_1$. It is straightforward to show that $x_K=2a\alpha + 2\sqrt{\alpha}$. The time $t''$ of passing $K=E$ can also be computed as 
\begin{equation}
\label{eq:case_b_t''}
t_1 -t'' = \frac{1}{2a}\ln(2a\sqrt{\alpha}+1).
\end{equation}
\subsubsection*{\underline{Subcase B-1} ($(x_{t_0}^\star,p_{t_0}^\star)$ is in the green region or on the green curve)}
Notice that the solution $(\bar{x}_t, \bar{p}_t)$ to $f_1$ with boundary conditions $x_{t_0}=x_0$ and $p_{t_1}=0$ is still given by \eqref{eqn:xp_case1}.
If $\bar{x}_{t_1}\leq x_K$ occurs, the optimal solution is entirely in the green region and is characterized by \eqref{eqn:xp_case1}. There is no switching.
\subsubsection*{\underline{Subcase B-2} ($(x_{t_0}^\star,p_{t_0}^\star)$ is in the pink region)}
This case occurs when $\bar{x}_{t_1}> x_K$, $\bar{x}_{t_0}\bar{p}_{t_0}\leq \alpha$ and $x_0>\sqrt{\alpha}$. The optimal solution is entirely in the pink region and is characterized by \eqref{eqn:xp_case1}. Hence, there is no switching.
\subsubsection*{\underline{Subcase B-3} ($(x_{t_0}^\star,p_{t_0}^\star)$ is on the orange curve)}
This case occurs when \eqref{eqn:xp_case1} satisfies $\bar{x}_{t_1}> x_K$, $\bar{x}_{t_0}\bar{p}_{t_0}\leq \alpha$ and $x_0 \leq \sqrt{\alpha}$. In this case, the solution \eqref{eqn:xp_case1} over $t_0 \leq t \leq t_1$ is not contained in Region 1 and thus it is not a valid solution to the boundary problem of our interest. The optimal solution in this case is depicted as orange and green curves in Fig.~\ref{fig:case_b_color}. First, it follows the orange curve from $x_{t_0}=x_0$ to $x_{t'}=\sqrt{\alpha}$, stay on $E=K$ for $t'\leq t \leq t''$, and then follows the green trajectory from $x_{t''}=\sqrt{\alpha}$ to $x_{t_1}=x_K$.
From \eqref{eqn:x_case1}, the time $t'$ can be computed from
\begin{equation}
\label{eq:case_b_t'}
\sqrt{\alpha}=\frac{(2ax_0+1)e^{2a(t'-t_0)}-1}{2a}.
\end{equation}
There are two switches in the optimal control input: a switch from $u=0$ to $u=u^\star$ at time $t'$, and a switch from $u=u^\star$ to $u=0$ at time $t''$.
\subsubsection*{\underline{Subcase B-4} ($(x_{t_0}^\star,p_{t_0}^\star)$ is in the gray region)}
This case occurs when \eqref{eqn:xp_case1} satisfies $\bar{x}_{t_1}>x_K$ and $\bar{x}_{t_0}\bar{p}_{t_0}>\alpha$, and equation \eqref{eqn:caseA_switching_time} has a solution $t'$ in $[t_0, t_1]$. In this case, a single switching from the gray region to the pink region occurs at $t'$.
\subsubsection*{\underline{Subcase B-5} ($(x_{t_0}^\star,p_{t_0}^\star)$ is on the purple curve)}
This case occurs when \eqref{eqn:xp_case1} satisfies $\bar{x}_{t_1}>x_K$ and $\bar{x}_{t_0}\bar{p}_{t_0}>\alpha$, and equation \eqref{eqn:caseA_switching_time} does not have a solution $t'$ in $[t_0, t_1]$.
In this case, the optimal solution follows the black trajectory from $x_{t_0}=x_0$ to $x_{t'}=\sqrt{\alpha}$, stays at $E=K$ for $t'\leq t \leq t''$, and then follows the green trajectory from $x_{t''}=\sqrt{\alpha}$ to $x_{t_1}=x_K$.
From \eqref{eqn:region3_x}, the time $t'$ can be computed from
\begin{equation}
\label{eq:case_b5_t'}
\sqrt{\alpha}=\frac{1}{\gamma}\left(a+c-\frac{2c}{k_3e^{2 ct'}+1}\right)
\end{equation}
where $k_3=\frac{c-a+x_0\gamma}{c+a-x_0\gamma}e^{-2ct_0}$. The optimal control input switches twice: a switch from $u=\gamma$ to $u=u^\star$ at $t'$, and a switch from $u=u^\star$ to $u=0$ at $t''$.

\begin{algorithm}[H]
 \caption{Optimal solution for case B}
\hspace*{\algorithmicindent} \textbf{Input:} $a<0$, $\alpha$, $x_0$, $\gamma$, $t_0$ and $t_1$  \\
 \hspace*{\algorithmicindent} \textbf{Output:} $u_t^\star$
\begin{algorithmic}[1]
\State Compute $(\bar{x}_{t_0}, \bar{p}_{t_0})$ and $\bar{x}_{t_1}$ from \eqref{eqn:xp_case1};
\If {$\bar{x}_{t_1}\leq x_K$}\Comment{Subcase B-1}
    \State $u_t^\star\gets\{0_{[t_0,t_1]}\}$;
\ElsIf{$\bar{x}_{t_0}\bar{p}_{t_0}<\alpha$ and $x_0>\sqrt{\alpha}$}\Comment{Subcase B-2}
    \State $u_t^\star\gets\{0_{[t_0,t_1]}\}$;
\ElsIf{$\bar{x}_{t_0}\bar{p}_{t_0}<\alpha$ and $x_0\leq \sqrt{\alpha}$}\Comment{Subcase B-3}
    \State Compute $t'$ from \eqref{eq:case_b_t'};
    \State Compute $t''$ from \eqref{eq:case_b_t''};
    \State $u_t^\star\gets\{0_{[t_0,t']},u^\star_{[t',t'']},0_{[t'',t_1]}\}$;
\ElsIf{\eqref{eqn:caseA_switching_time} has a solution $t'$ in $[t_0, t_1]$}\Comment{Subcase B-4}   
    \State Compute $t'$ from \eqref{eqn:caseA_switching_time};
    \State $u_t^\star\gets\{\gamma_{[t_0,t']},0_{[t',t_1]}\}$;
\Else\Comment{Subcase B-5}
    \State Compute $t'$ from \eqref{eq:case_b5_t'};
    \State Compute $t''$ from \eqref{eq:case_b_t''};
    \State $u_t^\star\gets\{\gamma_{[t_0,t']},u^\star_{[t',t'']},0_{[t'',t_1]}\}$;
       \EndIf
 \end{algorithmic}
\label{alg:case_b}
\end{algorithm}




\subsubsection{Case C}
In this case, the initial state-costate pair $(x_{t_0}^\star,p_{t_0}^\star)$ of the optimal solution can belong to four different regions indicated by four different colors in Fig.~\ref{fig:case_c_color}. The boundaries of the pink region are defined by the switching surface $\mathcal{S}$ and the separatrices converging to the point $E$.
\subsubsection*{\underline{Subcase C-1} ($(x_{t_0}^\star,p_{t_0}^\star)$ is in the green region)} Let $x_K=2a\alpha+2\sqrt{\alpha}$ be the $x$-coordinate shown in Fig.~\ref{fig:case_c_color}.
Consider once again the trajectory \eqref{eqn:xp_case1} solving $f_1$ with the boundary conditions $x_{t_0}=x_0$ and $p_{t_1}=0$. If we have $\bar{x}_{t_1}\leq x_K$, then the optimal solution is entirely in the yellow region. No switching occurs in this case.
\subsubsection*{\underline{Subcase C-2} ($(x_{t_0}^\star,p_{t_0}^\star)$ is in the blue region)} Consider \eqref{eqn:xp_case1} again. If $\bar{x}_{t_1}> x_K$, $x_0 > \sqrt{\alpha}$ and $\bar{x}_{t_0}\bar{p}_{t_0}\leq \alpha$, then the trajectory \eqref{eqn:xp_case1} is entirely in the blue region. No switching occurs in this case.
\subsubsection*{\underline{Subcase C-3} ($(x_{t_0}^\star,p_{t_0}^\star)$ is in the pink region)}
In this case, a switching occurs once. Let $t'\in (t_0, t_1)$ be the switching time. Notice that the optimal solution follows the trajectory of the form
\begin{subequations}
\begin{align}
      &\hat{x}_t=\frac{1}{\gamma}\left(a+c-\frac{2c}{k_3e^{2 ct}+1}\right), \\
      & \hat{p}_{t}= k_4\frac{\left(k_3e^{2 ct}+1\right)^2}{e^{2 ct}}+ \frac{(1+\alpha\gamma)\left(k_3e^{2 ct}+1\right)}{2ck_3e^{2 ct}}
\end{align}
\label{eqn:x_hat}
\end{subequations}
for $t_0\leq t \leq t'$, and 
\begin{align}
   \bar{p}_t=\frac{e^{-2a(t-t_1)}-1}{2a} 
   \label{eqn:p_bar}
\end{align}
for $t'\leq t\leq t_1$. Thus, Subcase C-3 applies if the following set of equations in terms of unknowns $p_{t_0}, k_3, k_4$ and $t'$ admits a solution such that $t_0 < t' < t_1$ and $x_0 p_{t_0} > \alpha$:
\begin{subequations}
\label{eq:c3_condition}
\begin{align}
    &x_{t_0}^\star=x_0=\frac{1}{\gamma}\left(a+c-\frac{2c}{k_3e^{2ct_0}+1}\right) \label{eq:c3_condition1} \\
    & p_{t_0}^\star=k_4\frac{\left(k_3e^{2 ct_0}+1\right)^2}{e^{2 ct_0}}+ \frac{(1+\alpha\gamma)\left(k_3e^{2 ct_0}+1\right)}{2ck_3e^{2 ct_0}} \label{eq:c3_condition2} \\
    &\frac{1}{\gamma}\left(a+c-\frac{2c}{k_3e^{2 ct'}+1}\right)\left(\frac{e^{-2a(t'-t_1)}-1}{2a}\right)=\alpha \label{eq:c3_condition3} \\
    &k_4\frac{\left(k_3e^{2 ct'}+1\right)^2}{e^{2 ct'}}+ \frac{(1+\alpha\gamma)\left(k_3e^{2 ct'}+1\right)}{2ck_3e^{2 ct'}} \nonumber \\
    &\hspace{28ex}=\frac{e^{-2a(t'-t_1)}-1}{2a} \label{eq:c3_condition4}
\end{align}
\end{subequations}
Condition \eqref{eq:c3_condition3} ensures that $\hat{x}_{t'}\bar{p}_{t'}=\alpha$, and \eqref{eq:c3_condition4} ensures that $\hat{p}_{t'}=\bar{p}_{t'}$ (i.e., the transition from $\widehat{p}_t$ to $\bar{p}_t$ is continuous).
\subsubsection*{\underline{Subcase C-4} ($(x_{t_0}^\star,p_{t_0}^\star)$ is in the purple region)}
Switching occurs twice in this case. Let $t'$ and $t''$ be the first and the second switching times. The optimal solution follows the trajectory of the form
\[
\tilde{x}_t=\frac{k_1e^{2at}-1}{2a},\quad \tilde{p}_t=\frac{k_2e^{-2at}-1}{2a}
\]
for $t_0\leq t \leq t'$ and satisfies \eqref{eqn:x_hat}
for $t'\leq t \leq t''$. For $t''\leq t_1$, the $p$-coordinate of the optimal solution satisfies \eqref{eqn:p_bar}.
Therefore, Subcase C-4 applies if the following set of equations in terms of unknowns $k_1, k_2, k_3, k_4, p_{t_0}, t'$ and $t''$ admits a solution such that $x_0p_{t_0}\leq \alpha$ and $t_0 \leq t' < t'' < t_1$:
\begin{subequations}
\label{eq:c4_condition}
\begin{align}
    &x_{t_0}^\star=x_0=\frac{k_1e^{2at_0}-1}{2a} \label{eq:c4_condition1} \\
    &p_{t_0}^\star=\frac{k_2e^{-2at_0}-1}{2a} \label{eq:c4_condition2} \\ 
    &\left(\frac{k_1e^{2at'}-1}{2a}\right)\left(\frac{k_2e^{-2at'}-1}{2a}\right)=\alpha \label{eq:c4_condition3} \\
    &\frac{k_1e^{2at'}-1}{2a}=\frac{1}{\gamma}\left(a+c-\frac{2c}{k_3e^{2 ct'}+1}\right) \label{eq:c4_condition4} \\
    &\frac{k_2e^{-2at'}-1}{2a}=k_4\frac{\left(k_3e^{2 ct'}+1\right)^2}{e^{2 ct'}}+\nonumber\\
    &\quad\quad\quad\quad\quad\quad\quad\quad\quad\frac{(1+\alpha\gamma)\left(k_3e^{2 ct'}+1\right)}{2ck_3e^{2 ct'}} \label{eq:c4_condition5} \\
    &\frac{1}{\gamma}\left(a+c-\frac{2c}{k_3e^{2 ct''}+1}\right)\left(\frac{e^{-2a(t''-t_1)}-1}{2a}\right)=\alpha \label{eq:c4_condition6} \\
    &k_4\frac{\left(k_3e^{2 ct''}+1\right)^2}{e^{2 ct''}}+ \frac{(1+\alpha\gamma)\left(k_3e^{2 ct''}+1\right)}{2ck_3e^{2 ct''}} \nonumber \\
    &\hspace{28ex}=\frac{e^{-2a(t''-t_1)}-1}{2a} \label{eq:c4_condition7}
\end{align}
\end{subequations}
Conditions \eqref{eq:c4_condition3} and \eqref{eq:c4_condition6} ensure that $\tilde{x}_{t'}\tilde{p}_{t'}=\alpha$ and $\bar{x}_{t''}\hat{p}_{t''}=\alpha$ (i.e., switching happens on the switching surface). Conditions \eqref{eq:c4_condition4}, \eqref{eq:c4_condition5} and \eqref{eq:c4_condition7} ensure that the trajectory is continuous at switching times.
\begin{algorithm}[H]
 \caption{Optimal solution for case C}
\hspace*{\algorithmicindent} \textbf{Input:} $a<0$, $\alpha$, $x_0$, $\gamma$, $t_0$ and $t_1$  \\
 \hspace*{\algorithmicindent} \textbf{Output:} $u_t^\star$
\begin{algorithmic}[1]
\State Compute $(\bar{x}_{t_0}, \bar{p}_{t_0})$ and $\bar{x}_{t_1}$ from \eqref{eqn:xp_case1};
\If {$\bar{x}_{t_1}\leq x_K$}\Comment{Subcase C-1}
    \State $u_t^\star\gets\{0_{[t_0,t_1]}\}$;
\ElsIf{$\bar{x}_{t_0}\bar{p}_{t_0}<\alpha$ and $x_0>\sqrt{\alpha}$}\Comment{Subcase C-2}
    \State $u_t^\star\gets\{0_{[t_0,t_1]}\}$;
\ElsIf{\eqref{eq:c3_condition} has a solution}\Comment{Subcase C-3}
    \State Compute $t'$ from \eqref{eq:c3_condition};
    \State $u_t^\star\gets\{\gamma_{[t_0,t']}, 0_{[t',t_1]}\}$;
\Else\Comment{Subcase C-4}
    \State Compute $t'$ and $t''$ from \eqref{eq:c4_condition};
    \State $u_t^\star\gets\{0_{[t_0,t']},\gamma_{[t',t'']},0_{[t'',t_1]}\}$;
       \EndIf
 \end{algorithmic}
\label{alg:case_b}
\end{algorithm}

\section{Time-invariant solutions}
\label{sec:time-invariance}
We now turn our attention to the problem of finding the optimal time-invariant channel gain $C\in\mathbb{R}^{n\times n}$ as formulated in 
\eqref{problem:time_invariant}.
Since $(A, B)$ is controllable (we have assumed that $B\in\mathbb{R}^{n\times n}$ is nonsingular) and $(A, C)$ is detectable (we have assumed that $A$ is Hurwitz) for every $C\in\mathbb{R}^{n\times n}$, the algebraic Riccati equation
\begin{equation}
\label{eq:riccati}
AX+XA^\top-XC^\top CX+BB^\top=0
\end{equation} 
admits a unique positive semidefinite solution, which is positive definite  \cite[Theorem 13.7,
Corollary 13.8]{zhou1996robust}. In this case, it can also be shown that the
solution $X_t$ to the Riccati differential equation \eqref{problem:time_invariant_riccati_b}  
with the initial condition $X_{t_0}=X_0\succeq 0$ satisfies $X_t \rightarrow X$ as $t \rightarrow +\infty$
(e.g., \cite[Theorem 10.10]{bitmead1991riccati}), where $X$ is the
unique positive definite solution to \eqref{eq:riccati}. 
Therefore, it follows from the convergence of Ces\`{a}ro mean that
\begin{align*}
&\frac{1}{t_1-t_0}\int_{t_0}^{t_1} \text{Tr}(CX_tC^\top)dt \rightarrow \text{Tr}(CXC^\top) \\
&\frac{1}{t_1-t_0}\int_{t_0}^{t_1}  \text{Tr}(X_t)dt \rightarrow \text{Tr}(X) 
\end{align*}
as $t_1\rightarrow +\infty$. Thus, \eqref{problem:time_invariant_riccati} can be simplified as
\begin{subequations}
\label{problem:time_invariant_are}
\begin{align}
    \min_{C\in\mathbb{R}^{n\times n}, X\in\mathbb{S}_+^n}& \text{Tr}(X)+\alpha\text{Tr}(CXC^\top) \label{problem:time_invariant_are_a}\\
    \mathrm{s.t. }\qquad &AX+XA^\top-XC^\top CX+BB^\top=0 \label{problem:time_invariant_are_b} \\
    &C^\top C  \preceq \gamma I. \label{problem:time_invariant_are_c}
\end{align}
\end{subequations}
As the main result of this section, we show that \eqref{problem:time_invariant_are} can be reformulated as an equivalent semidefinite program with a rank constraint. The result is summarized in the next theorem.
\begin{theorem}
\label{thm:lmi}
Suppose $A\in\mathbb{R}^{n\times n}$ is Hurwitz and $B\in\mathbb{R}^{n\times n}$ is nonsingular. For any given positive constants $\alpha$ and $\gamma$, the following statements hold:
\begin{itemize}
\item[(i)] The optimal value of the problem  \eqref{problem:time_invariant} coincides with the value of the semidefinite program with a rank constraint:
\begin{subequations}
\label{eq:sdp_a}
\begin{align}
    \min_{X,  Y\in\mathbb{S}_+^n}& \text{Tr}(X)+\alpha\text{Tr}(B^\top YB)+2\alpha \text{Tr}(A) \label{eq:sdp_a1}\\
    \mathrm{s.t. } \quad &AX+XA^\top+BB^\top \succeq 0 \label{eq:sdp_a2} \\
    &\begin{bmatrix} YA+A^\top Y -\gamma I & YB \\ B^\top Y & -I\end{bmatrix} \preceq 0 \label{eq:sdp_a3} \\
        &\begin{bmatrix} X & I \\ I & Y \end{bmatrix} \succeq 0 \label{eq:sdp_a4} \\
         &\text{rank}\begin{bmatrix} X & I \\ I & Y \end{bmatrix} =n. \label{eq:sdp_a5} 
\end{align}
\end{subequations}
\item[(ii)] An optimal solution $(X^\star, Y^\star)$ to \eqref{eq:sdp_a} exists and satisfies $X^\star\succ 0$ and $Y^\star\succ 0$. Moreover, any matrix $C^\star \in\mathbb{R}^{n\times n}$ satisfying
\begin{equation}
\label{eq:c_reconstruct}
{C^\star}^\top C^\star=Y^\star A+A^\top Y^\star + Y^\star BB^\top Y^\star
\end{equation}
is an optimal solution to \eqref{problem:time_invariant}.
\item[(iii)] Suppose that a semidefinite program obtained by removing the rank constraint \eqref{eq:sdp_a5} from \eqref{eq:sdp_a} admits an optimal solution satisfying $X^\star \succ 0$, $Y^\star \succ 0$ and 
\[
\text{rank}\begin{bmatrix} X^\star  & I \\ I & Y^\star \end{bmatrix} =n.
\]
Then, any matrix $C^\star \in\mathbb{R}^{n\times n}$ satisfying \eqref{eq:c_reconstruct} is an optimal solution to \eqref{problem:time_invariant}.
\end{itemize}
\end{theorem}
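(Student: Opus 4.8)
The plan is to pass from the reduced problem \eqref{problem:time_invariant_are}, which the discussion preceding the theorem has already identified with \eqref{problem:time_invariant}, to the rank-constrained SDP \eqref{eq:sdp_a} via the change of variables $Y = X^{-1}$, establishing a value-preserving correspondence between feasible points; part (i) then follows, and parts (ii) and (iii) drop out of this correspondence together with a compactness argument. Recall that for every $C$ the algebraic Riccati equation \eqref{problem:time_invariant_are_b} has a unique positive semidefinite solution, which is positive definite, so \eqref{problem:time_invariant_are} is effectively a minimization over $C$ alone with $X = X(C) \succ 0$ determined.

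First I would carry out the substitution. Setting $Y = X^{-1}$ and multiplying \eqref{problem:time_invariant_are_b} on both sides by $Y$ yields $C^\top C = YA + A^\top Y + YBB^\top Y$, i.e.\ \eqref{eq:c_reconstruct}. Under this substitution the gain constraint $C^\top C \preceq \gamma I$ becomes $YA + A^\top Y + YBB^\top Y - \gamma I \preceq 0$, which, by a Schur complement with respect to the strictly negative definite block $-I$, is exactly \eqref{eq:sdp_a3}; the Riccati equation rearranges to $AX + XA^\top + BB^\top = XC^\top CX \succeq 0$, which is \eqref{eq:sdp_a2}, and conversely multiplying \eqref{eq:sdp_a2} on both sides by $Y$ shows $YA + A^\top Y + YBB^\top Y \succeq 0$, so a real matrix $C$ with $C^\top C$ equal to that expression exists; and, using $XY = YX = I$, the objective \eqref{problem:time_invariant_are_a} equals $\mathrm{Tr}(X) + \alpha\,\mathrm{Tr}(C^\top C X) = \mathrm{Tr}(X) + \alpha\,\mathrm{Tr}(B^\top Y B) + 2\alpha\,\mathrm{Tr}(A)$, which is \eqref{eq:sdp_a1}. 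The only coupling left to relax is $Y = X^{-1}$; I would replace it by $\begin{bmatrix} X & I \\ I & Y \end{bmatrix} \succeq 0$ plus the rank constraint \eqref{eq:sdp_a5}. The key observation is that \eqref{eq:sdp_a4} already forces $X \succ 0$ and $Y \succ 0$ (a null vector of $X$ makes the associated quadratic form negative along a suitable direction), so by a Schur complement $\mathrm{rank}\begin{bmatrix} X & I \\ I & Y \end{bmatrix} = n + \mathrm{rank}(Y - X^{-1})$, hence \eqref{eq:sdp_a5} holds iff $Y = X^{-1}$. Finally I would check, by the one-line identity $XMX = AX + XA^\top + BB^\top$ with $M = YA + A^\top Y + YBB^\top Y$, that any $C$ with $C^\top C = M$ has $X$ as the (unique) positive definite solution of its Riccati equation. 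Together these give the claimed value-preserving bijection (modulo the many-to-one map $C \mapsto C^\top C$), proving (i).

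For (ii) I would argue existence on the side of \eqref{problem:time_invariant_are}: the set $\{C : C^\top C \preceq \gamma I\}$ is compact, the map $C \mapsto X(C)$ is continuous, and the objective is continuous, so the minimum is attained at some $(C^\star, X^\star)$ with $X^\star \succ 0$; then $(X^\star, (X^\star)^{-1})$ is optimal for \eqref{eq:sdp_a} with both blocks positive definite, and any $C^\star$ satisfying \eqref{eq:c_reconstruct} is, by the correspondence in (i), optimal for \eqref{problem:time_invariant_are} and hence for \eqref{problem:time_invariant}. For (iii): if the relaxation obtained by deleting \eqref{eq:sdp_a5} has an optimizer $(X^\star, Y^\star)$ with $X^\star \succ 0$, $Y^\star \succ 0$ and $\mathrm{rank}\begin{bmatrix} X^\star & I \\ I & Y^\star \end{bmatrix} = n$, then this point is feasible for the full program \eqref{eq:sdp_a}; since deleting a constraint cannot increase the optimal value, the two values coincide, so $(X^\star, Y^\star)$ is optimal for \eqref{eq:sdp_a}, and the reconstruction of (ii) applies verbatim.

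The step I expect to be the main obstacle is the existence assertion in (ii), which rests on the continuity (and indeed well-posedness) of the Riccati solution $X(C)$ over the \emph{entire} compact constraint set, including degenerate choices such as $C = 0$ where $(A,C)$ is detectable but not observable; I would handle this by invoking the standard continuous dependence of the maximal solution of the algebraic Riccati equation on its coefficients, using the same solvability results already cited for \eqref{eq:riccati}. The remaining ingredients—the two Schur-complement reformulations, the trace identity, and the fact that \eqref{eq:sdp_a4} alone yields $X \succ 0$ and $Y \succ 0$—are routine and short.
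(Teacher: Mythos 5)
Your proposal is correct and follows essentially the same route as the paper: the change of variables $Y=X^{-1}$, the trace identity $\mathrm{Tr}(CXC^\top)=2\mathrm{Tr}(A)+\mathrm{Tr}(B^\top YB)$ obtained from the ARE, the Schur-complement reformulations of the gain bound and the inverse coupling, and the observation that the rank constraint plus \eqref{eq:sdp_a4} encodes $Y=X^{-1}$. The reconstruction direction (that any $C$ with $C^\top C = YA+A^\top Y+YBB^\top Y$ yields $X$ as the ARE solution) and the treatment of (iii) also match.

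The one place where you diverge is the existence argument in (ii). You argue on the $C$-side: compactness of $\{C:C^\top C\preceq\gamma I\}$ plus continuity of the map $C\mapsto X(C)$, and you correctly flag that the latter needs to hold uniformly over the constraint set including, e.g., $C=0$. The paper instead argues entirely on the $(X,Y)$-side: the feasible set of the rank-constrained SDP is closed (the positivity constraints and the rank-$\le n$ condition are closed, and \eqref{eq:sdp_a4} forces rank $\ge n$), and the objective is coercive since $B$ is nonsingular, so Weierstrass applies directly without any Riccati-continuity input. The paper's route is more self-contained for this step; your route is equally valid but carries the extra burden you already identified. A second, minor stylistic difference: you deduce $X^\star\succ 0$, $Y^\star\succ 0$ directly from the LMI \eqref{eq:sdp_a4} by a one-line quadratic-form argument, whereas the paper establishes the same fact via a limiting-sequence argument (if $X_n$ approaches a singular matrix then $\bar\sigma(Y_n)\to\infty$ and the objective diverges); your version is cleaner, and both are correct.
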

\begin{remark}
In Section~\ref{sec:proof_lmi} below, we prove the equivalence between the original problem \eqref{problem:time_invariant} and the optimization problem \eqref{eq:sdp_a}.
Notice that off-the-shelf solvers are not applicable to \eqref{eq:sdp_a} because of the non-convex rank constraint.
Remarkably, however, in numerous numerical experiments we have performed (see Section~\ref{sec:experiments}), the convex relaxation obtained by dropping the rank constraint \eqref{eq:sdp_a5} always admitted a solution satisfying \eqref{eq:sdp_a5}.
Thus, as per statement (iii) of Theorem~\ref{thm:lmi}, an optimal solution $C^\star \in\mathbb{R}^{n\times n}$ to the original problem \eqref{problem:time_invariant} was always computable by solving the semidefinite program \eqref{eq:sdp_a1}-\eqref{eq:sdp_a4}.
Currently, it is not known to us whether the convex relaxation is always exact.
\end{remark}

\subsection{Proof of Theorem~\ref{thm:lmi}}
\label{sec:proof_lmi}
We will show the equivalence between \eqref{problem:time_invariant_are} and \eqref{eq:sdp_a}. First, notice that the constraint \eqref{problem:time_invariant_are_b} implies that $X$ is positive definite, and that
\begin{align*}
\text{Tr}(CXC^\top) &=\text{Tr}(XC^\top C XX^{-1}) \\
&=\text{Tr}(AX+XA^\top BB^\top) X^{-1} \\
&=2\text{Tr}(A)+\text{Tr}(B^\top X^{-1}B).
\end{align*}
Introducing $Y:=X^{-1}$, \eqref{problem:time_invariant_are} can be written as an equivalent problem:
\begin{subequations}
\label{eq:sdp_b}
\begin{align}
    \min \quad & \text{Tr}(X)+\alpha\text{Tr}(B^\top YB)+2\alpha \text{Tr}(A) \label{eq:sdp_b1}\\
    \mathrm{s.t. }\quad &AX+XA^\top-XC^\top CX+BB^\top = 0 \label{eq:sdp_b2} \\
    &YA+A^\top Y-C^\top C +YBB^\top Y = 0 \label{eq:sdp_b3} \\
    &X=Y^{-1} \label{eq:sdp_b4} \\
    &C^\top C \preceq \gamma I \label{eq:sdp_b5} 
\end{align}
\end{subequations}
with respect to the variables $X\in\mathbb{S}_+^n$, $Y\in\mathbb{S}_+^n$ and $C\in\mathbb{R}^{n\times n}$. Notice that \eqref{eq:sdp_b2} and \eqref{eq:sdp_b3} are redundant conditions under the constraint $X=Y^{-1}$.
Next, we claim that \eqref{eq:sdp_b} is equivalent to the following optimization problem: 
\begin{subequations}
\label{eq:sdp_c}
\begin{align}
    \min_{X, Y} \quad & \text{Tr}(X)+\alpha\text{Tr}(B^\top YB)+2\alpha \text{Tr}(A) \label{eq:sdp_c1}\\
    \mathrm{s.t. }\quad &AX+XA^\top+BB^\top \succeq 0 \label{eq:sdp_c2} \\
    &YA+A^\top Y-\gamma I +YBB^\top Y \preceq 0 \label{eq:sdp_c3} \\
    &X=Y^{-1} \label{eq:sdp_c4} \\
    &X\succ 0, \;\; Y\succ 0. \label{eq:sdp_c5} 
\end{align}
\end{subequations}
It is clear that the constraints \eqref{eq:sdp_b2}-\eqref{eq:sdp_b5} imply the constraints \eqref{eq:sdp_c2}-\eqref{eq:sdp_c5}. Conversely,  for any $(X, Y)$ satisfying \eqref{eq:sdp_c2}-\eqref{eq:sdp_c5}, it is always possible to construct a tuple $(X, Y, C)$ satisfying \eqref{eq:sdp_b2}-\eqref{eq:sdp_b5} by choosing $C\in\mathbb{R}^{n\times n}$ to satisfy
\[
C^\top C=YA+A^\top Y + YBB^\top Y.
\]
Applying the Schur complement formula to \eqref{eq:sdp_c3}, and noticing that $X=Y^{-1}\succ 0$ is equivalent to
\[
\begin{bmatrix} X & I \\ I & Y \end{bmatrix} \succeq 0  \text{ and } \text{rank}\begin{bmatrix} X & I \\ I & Y \end{bmatrix} =n,
\]
\eqref{eq:sdp_b} can be written as
\begin{subequations}
\label{eq:sdp_d}
\begin{align}
    \min_{X,  Y\in\mathbb{S}_+^n}& \text{Tr}(X)+\alpha\text{Tr}(B^\top YB)+2\alpha \text{Tr}(A) \label{eq:sdp_d1}\\
    \mathrm{s.t. } \quad &AX+XA^\top-XC^\top CX+BB^\top \succeq 0 \label{eq:sdp_d2} \\
    &\begin{bmatrix} YA+A^\top Y -\gamma I & YB \\ B^\top Y & -I\end{bmatrix} \preceq 0 \label{eq:sdp_d3} \\
        &\begin{bmatrix} X & I \\ I & Y \end{bmatrix} \succeq 0 \label{eq:sdp_d4} \\
         &\text{rank}\begin{bmatrix} X & I \\ I & Y \end{bmatrix} =n. \label{eq:sdp_d5}  \\
         & X \succ 0, \; Y \succ 0. \label{eq:sdp_d6}
\end{align}
\end{subequations}
It is left to show that \eqref{eq:sdp_a} has an optimal solution $(X^\star, Y^\star)$ and that $X^\star \succ 0, Y^\star \succ 0$.
The existence of an optimal solution is guaranteed by Weierstrass' theorem \cite[Proposition A.8]{bertsekas1995nonlinear}, since the feasible domain for $(X,Y)$ is closed and the objective function is coercive.
To show $X^\star \succ 0$ must be the case, consider a sequence of feasible points $(X_n, Y_n)$ such that $X_n \rightarrow X_0$ as $n \rightarrow \infty$ for some singular $X_0 \succeq 0$. Then, from the constraint \eqref{eq:sdp_d4}, it is necessary that the maximum singular value $\bar{\sigma}(Y_n)$ satisfies $\bar{\sigma}(Y_n)\rightarrow \infty$ as $n \rightarrow \infty$. 
This implies $\text{Tr}(B^\top Y_n B)\rightarrow \infty$ as $n\rightarrow \infty$ and thus the objective function \eqref{eq:sdp_d1} tends to infinity.
Thus, by continuity of the objective function, it cannot be that $X^\star=X_0$.
The necessity of $Y^\star \succ 0$ can be shown similarly.
Hence  \eqref{eq:sdp_d6} is a redundant condition and can be removed. Therefore, we have shown the equivalence between  \eqref{problem:time_invariant_are} and \eqref{eq:sdp_a}, establishing statement (i). Statements (ii) and (iii) also follow from the argument above.

\subsection{Numerical experiments}
\label{sec:experiments}
To demonstrate the effectiveness of the convex relaxation presented in Theorem~\ref{thm:lmi}, we analyzed the numerical solutions to the semidefinite program \eqref{eq:sdp_a1}-\eqref{eq:sdp_a4} for a large number of randomly generated matrix pairs $A\in\mathbb{R}^{n\times n}, B\in\mathbb{R}^{n\times n}$.
To generate a random stable state space model, we used Matlab's \texttt{rss} function. To solve \eqref{eq:sdp_a1}-\eqref{eq:sdp_a4}, we used a semidefinite programming solver SDPT3 \cite{tutuncu2003solving}. 

With $n=15$, $\alpha=0.01$ and $\gamma=100$, we generated 1,000 pairs of random matrices $(A, B)$. In each simulation, we observed that the solution to  \eqref{eq:sdp_a1}-\eqref{eq:sdp_a4} satisfied the rank condition \eqref{eq:sdp_a5} up to the precision of the SDP solver. Fig.~\ref{fig:rank} shows a typical outcome of the eigenvalues $\lambda_i$ of the matrix $\begin{bmatrix} X & I \\ I & Y \end{bmatrix} \in \mathbb{S}_+^{2n}$. It can be seen that the last $n$ eigenvalues are negligible compared to the first $n$.  
\begin{figure}[t]
\centering
\includegraphics[width=9.5cm]{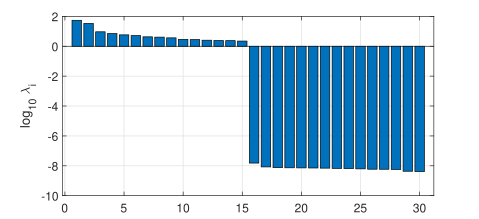}
\caption[Eigenvalues]{Eigenvalues $\lambda_i$, $i=1, 2, ... , 30$ of $\begin{bmatrix} X & I \\ I & Y \end{bmatrix}\in \mathbb{S}_+^{30}$ obtained as a numerical solution to the semidefinite program \eqref{eq:sdp_a1}-\eqref{eq:sdp_a4}.}
\label{fig:rank}
\end{figure}

In Fig.~\ref{fig:svd}, we plot a typical outcome of the eigenvalues of $C^\top C$ computed by \eqref{eq:c_reconstruct}. Again, we set $n=15$, and the upper limit of the channel gain is set to $\gamma=100$.
We solved \eqref{eq:sdp_a1}-\eqref{eq:sdp_a4} with different values of $\alpha$. As expected, the optimal channel gain tends to decrease as $\alpha$ increases. It is also noteworthy that both the lower saturation $\lambda_i(C^\top C)=0$ and the upper saturation $\lambda_i(C^\top C)=100$ can occur.

\begin{figure}[t]
\centering
\includegraphics[width=9.5cm]{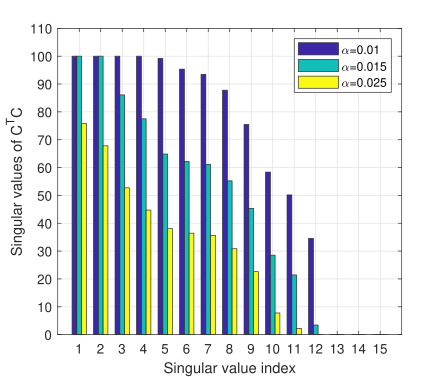}
\caption{Singular values of $C^\top C$ constructed from a numerical solution to the semidefinite program \eqref{eq:sdp_a1}-\eqref{eq:sdp_a4}.}
\label{fig:svd}
\end{figure}

\subsection{Scalar case}
We finally consider a special case of \eqref{problem:time_invariant} with $n=1$ to show that: (1) the exactness of the SDP relaxation discussed in Theorem~\ref{thm:lmi} (iii) holds in the scalar case, and (2) stationary points in the phase portrait we used to analyze the finite-horizon optimal channel gain control problem characterize the optimal time-invariant solution to the infinite-horizon channel gain control problem in the scalar case.
As in Section~\ref{sec:scalar}, we assume $A=a<0$, $B=1$, $\alpha>0$ and $\gamma>0$.

The next proposition shows that the rank condition \eqref{eq:sdp_a5} is automatically satisfied    even if it is not explicitly included in \eqref{eq:sdp_a} in the scalar case.
\begin{proposition}
The semidefinite program
\begin{subequations}
\label{eq:sdp_scalar}
\begin{align}
    \min_{x \geq0, y \geq 0} \quad & x+\alpha y + 2a\alpha  \label{eq:sdp_scalar1}\\
    \mathrm{s.t. }\quad & 2ax+1 \geq 0 \label{eq:sdp_scalar2} \\
    &\begin{bmatrix} 2ay-\gamma & y \\ y & -1 \end{bmatrix} \preceq 0, \; \begin{bmatrix} x & 1 \\ 1 & y \end{bmatrix} \succeq 0 \label{eq:sdp_scalar3} 
\end{align}
\end{subequations}
admits an optimal solution $(x^\star, y^\star)$. Moreover, $x^\star>0$, $y^\star>0$ and 
\begin{equation}
\label{eq:rank1}
    \text{rank}\begin{bmatrix} x^\star& 1 \\ 1 & y^\star \end{bmatrix}=1.
\end{equation}
\end{proposition}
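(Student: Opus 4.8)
The plan is to eliminate the two linear matrix inequalities in favour of scalar inequalities, reduce \eqref{eq:sdp_scalar} to a one–dimensional minimization over $y$, and then read off the rank-one property from the monotonicity of the objective in $x$.

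\emph{Scalarizing the constraints.} Since the $(2,2)$ block of $\begin{bmatrix} 2ay-\gamma & y \\ y & -1\end{bmatrix}$ equals $-1<0$, the Schur complement shows that the first matrix inequality in \eqref{eq:sdp_scalar3} is equivalent to $y^2+2ay-\gamma\le 0$; because $a<0$, the two roots $-a\pm\sqrt{a^2+\gamma}$ satisfy $-a-c<0<-a+c$ with $c:=\sqrt{a^2+\gamma}$, so the constraint reads $0\le y\le c-a$. For the second inequality in \eqref{eq:sdp_scalar3}, a $2\times 2$ symmetric matrix $\begin{bmatrix} x & 1 \\ 1 & y\end{bmatrix}$ is positive semidefinite iff $x\ge 0$, $y\ge 0$ and $xy\ge 1$; since $x=0$ would force the determinant to equal $-1$, every feasible point of \eqref{eq:sdp_scalar} automatically has $x>0$ and $y>0$, which already establishes the strict-positivity claim. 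Finally, \eqref{eq:sdp_scalar2} is simply $x\le -1/(2a)$.

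\emph{Existence and rank.} Combining $xy\ge 1$ with $x\le -1/(2a)$ forces $y\ge -2a$, so admissible values of $y$ lie in $[-2a,\,c-a]$; this interval is non-empty because $c\ge -a$ (equivalently $\gamma\ge 0$), and for every such $y$ the point $(x,y)=(1/y,\,y)$ is feasible, so \eqref{eq:sdp_scalar} has a non-empty feasible set. That set is closed and bounded ($x\in[1/(c-a),\,-1/(2a)]$ and $y\in[-2a,\,c-a]$), so by Weierstrass' theorem an optimal solution $(x^\star,y^\star)$ exists, with $x^\star>0$ and $y^\star>0$ by the previous paragraph. Now fix any feasible $y$: the objective $x+\alpha y+2a\alpha$ is strictly increasing in $x$, and the feasible $x$-slice is exactly $[1/y,\,-1/(2a)]$ (non-empty precisely because $y\ge -2a$), so the optimal $x$ for that $y$ is $x=1/y$. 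Applying this at $y=y^\star$ gives $x^\star=1/y^\star$, i.e. $x^\star y^\star=1$, which is \eqref{eq:rank1}. If one also wants $y^\star$ explicitly, it suffices to minimize the reduced scalar function $y\mapsto 1/y+\alpha y$ over $[-2a,\,c-a]$.

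The only mildly delicate point is the bookkeeping that couples $2ax+1\ge 0$ with $xy\ge 1$ to confine $y$ to $[-2a,\,c-a]$ and to certify that the feasible set is non-empty and compact; the remaining steps are the routine Schur-complement reductions and the strict monotonicity of the objective in $x$, so I do not expect any genuine obstacle beyond careful sign handling because $a<0$.
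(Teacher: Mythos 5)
Your proof is correct. The core step establishing \eqref{eq:rank1} is the same as the paper's: the objective is strictly increasing in $x$, and decreasing $x$ to $1/y^\star$ preserves feasibility of \eqref{eq:sdp_scalar2} because $a<0$, so any optimal point must have $xy=1$. Where you genuinely diverge is in how existence and strict positivity are obtained. The paper outsources both to the proof of Theorem~\ref{thm:lmi}: existence there follows from coercivity of the objective, and positivity from a limiting argument showing that $X\to X_0$ singular would force $\bar\sigma(Y)\to\infty$ and blow up the cost. You instead scalarize all three matrix constraints and obtain explicit compact intervals $x\in[1/(c-a),\,-1/(2a)]$, $y\in[-2a,\,c-a]$ (with $c=\sqrt{a^2+\gamma}$), so existence drops out of Weierstrass applied to a nonempty compact set, and positivity is immediate from $xy\ge 1$. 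Your version is more self-contained and entirely elementary (no appeal to coercivity or a subsequence argument), and as a by-product it reduces the SDP to the one-variable problem $\min_{y\in[-2a,\,c-a]} 1/y + \alpha y$, which matches the explicit Cases A/B/C the paper later derives from the equivalent problem \eqref{eq:x_scalar}; the paper's version is terser because it reuses the machinery it already built for the vector case. Both are valid; no gap.
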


\begin{proof}
The existence and the positivity of an optimal solution follow from the proof of Theorem~\ref{thm:lmi}. To show \eqref{eq:rank1}, it is sufficient to prove that $x^\star y^\star = 1$. To complete the proof by contradiction, suppose $(x^\star, y^\star)$ is an optimal solution but $x^\star >1/y^\star$. Set $x^{\star \star}:=1/y^\star$ and consider a new solution candidate $(x^{\star \star}, y^\star)$. Clearly, $(x^{\star \star}, y^\star)$ satisfies the constraints \eqref{eq:sdp_scalar3}. It also satisfies \eqref{eq:sdp_scalar2} since 
\[
2ax^{\star \star}+1 > 2ax^{\star}+1 \geq 0
\]
where we have used the fact that $a<0$ and $x^\star>x^{\star \star}$. Moreover, $x^\star>x^{\star \star}$ implies that $(x^{\star \star}, y^\star)$ attains a smaller objective value \eqref{eq:sdp_scalar1} than $(x^\star, y^\star)$ does. This contradicts the optimality of $(x^\star, y^\star)$.
\end{proof}

It is also fruitful to obtain an explicit solution to \eqref{problem:time_invariant_are} in the scalar case. Setting $u=C^2$,  \eqref{problem:time_invariant_are} can be written as
\begin{subequations}
\label{eq:ux_scalar}
\begin{align}
    \min_{u, x} \quad & x(1+\alpha u) \label{eq:ux_scalar1}\\
    \mathrm{s.t. }\quad & 2ax-ux^2+1=0, \; x\geq 0 \label{eq:ux_scalar2} \\
    &0\leq u\leq \gamma. \label{eq:ux_scalar3} 
\end{align}
\end{subequations}
Equation \eqref{eq:ux_scalar2} can be solved explicitly as $x=\frac{1}{\sqrt{a^2+u}-a}$, and the constraint \eqref{eq:ux_scalar3} can be written in terms of $x$ as $\frac{a+\sqrt{a^2+\gamma}}{\gamma}\leq x \leq -\frac{1}{2a}$. 
Expressing the objective function \eqref{eq:ux_scalar1} in $x$, the problem \eqref{eq:ux_scalar} is simplified to 
\begin{subequations}
\label{eq:x_scalar}
\begin{align}
    \min_{x} \quad & x+2a\alpha+\frac{\alpha}{x} \label{eq:x_scalar1}\\
    \mathrm{s.t. }\quad & \frac{a+\sqrt{a^2+\gamma}}{\gamma}\leq x \leq -\frac{1}{2a} \label{eq:x_scalar2}
\end{align}
\end{subequations}
which can be solved easily. From this analysis, it can be shown that the optimal solution $(x^\star, u^\star)$ to \eqref{eq:ux_scalar} is obtained as follows:
\begin{itemize}
    \item Case A: $1/4a^2<\alpha$. In this case, $(x^\star, u^\star)=(-\frac{1}{2a},0)$.
    \item Case B: $(a+\sqrt{a^2+\gamma})^2/\gamma^2\leq\alpha\leq 1/4a^2$. In this case, $(x^\star, u^\star)=\left(\sqrt{\alpha}, \frac{2a}{\sqrt{\alpha}}+\frac{1}{\alpha} \right)$.
    \item Case C: $\alpha<(a+\sqrt{a^2+\gamma})^2/\gamma^2$. In this case, $(x^\star, u^\star)=\left(\frac{a+\sqrt{a^2+\gamma}}{\gamma},\gamma \right)$.
\end{itemize}
Notably, these solutions coincide with the stationary points in the phase portrait we obtained in \eqref{eqn:equilibrium_region_1}-\eqref{eqn:equilibrium_region_3}. 

\section{Future work}
In this paper, we formulated a continuous-time optimal channel gain control problem for minimum-information Kalman-Bucy filtering. Our special focus has been on the optimal time-varying solution to finite-horizon problems for scalar processes and the optimal time-invariant solution to infinite-horizon problems for vector processes. 
The presented results can be extended to multiple directions in the future.
\begin{enumerate}
    \item As a natural generalization of Section~\ref{sec:scalar}, the optimal time-varying channel gain for multi-dimensional source processes should be investigated as future work.
    \item Although we considered the optimal \emph{time-invariant} solutions in Section~\ref{sec:time-invariance}, it remains to show that there always exists a time-invariant solution that minimizes the average cost \eqref{problem:time_invariant_a} over an infinitely long time horizon. The existence of a time-invariant optimal solution in discrete-time setting has been shown in \cite{tanaka2015semidefinite}.
    \item It remains to find a formal proof of, or a counterexample to disprove, the exactness of the SDP relaxation presented in Theorem~\ref{thm:lmi}.
    \item The problem formulation should be extended to controlled source processes. Such a problem has been considered in a discrete-time setting  \cite{tanaka2017lqg} where directed information has been used in place of mutual information. Directed information in continuous time has been introduced in \cite{weissman2012directed}.
    \item A coding-theoretic interpretation (operational meaning) of the problem studied in this paper needs to be clarified in the future.
\end{enumerate}

\bibliographystyle{IEEEtran}
\bibliography{main}
\end{document}